\numberwithin{equation}{section}
\theoremstyle{plain}
\newtheorem{theorem}{Theorem}
\newtheorem{proposition}[theorem]{Proposition}
\newtheorem{lemma}[theorem]{Lemma}
\theoremstyle{definition}
\newtheorem{definition}[theorem]{Definition}
\theoremstyle{remark}
\newtheorem{remark}[theorem]{Remark}
\DeclareMathOperator{\im}{Im}
\def\Z{\mathbb{Z}}	
\def\C{\mathbb{C}}	
\def\R{\mathbb{R}}	
\def\S{\mathcal{S}}
\renewcommand{\leq}{\leqslant} 		
\renewcommand{\geq}{\geqslant}
\def\ud{\mathrm{d}}
\newcommand{\dev}{\partial}
\newcommand{\I}{\mathrm{i}}
\let\ker\relax
\DeclareMathOperator{\ker}{Ker}
\def\A{\mathcal{A}}
\def\hA{\hat{\A}}
\def\F{\mathcal{F}}
\def\hF{\hat{\F}}
\begin{document}

\title[Darboux-Getzler theorem for difference operators]{A Darboux-Getzler theorem for scalar difference Hamiltonian operators}
\author{Matteo Casati}
\author{Jing Ping Wang}
\address{School of Mathematics, Statistics and Actuarial Science\\University of Kent\\Canterbury CT2 9FS, United Kingdom}
\begin{abstract}
	In this paper we extend to the difference case the notion of Poisson-Lichnerowicz cohomology, an object encapsulating the building blocks for the theory of deformations of Hamiltonian operators. A local scalar difference Hamiltonian operator is a polynomial in the shift operator and its inverse, with coefficients in the algebra of difference functions, endowing the space of local functionals with the structure of a Lie algebra. Its Poisson-Lichnerowicz cohomology carries the information about the center, the symmetries and the admissible deformations of such algebra. The analogue notion for the differential case has been widely investigated: the first and most important result is the triviality of all but the lowest cohomology for first order Hamiltonian differential operators, due to Getzler. We study the Poisson-Lichnerowicz cohomology for the operator $K_0=\S-\S^{-1}$, which is the normal form for $(-1,1)$ order scalar difference Hamiltonian operators; we obtain the same result as Getzler did, namely $H^p(K_0)=0$ $\forall p>1$, and explicitly compute $H^0(K_0)$ and $H^1(K_0)$. We then apply our main result to the classification of lower order scalar Hamiltonian operators recently obtained by De Sole, Kac, Valeri and Wakimoto.
\end{abstract}
\maketitle

\tableofcontents

\section{Introduction}
Hamiltonian systems, both finite and infinite dimensional, are defined by two classes of objects: Poisson brackets, describing their underlying geometric structures, and Hamiltonian functions (or functionals), providing their dynamical content.

In the context of partial differential equations and of differential-difference equations, namely in the infinite dimensional settings, the notion of Poisson bracket is equivalently given in terms of so-called Hamiltonian operators.

Let us consider the class of evolutionary equation for a set of functions $u^i(x,t)$, $i\in\{1,\ldots,\ell\}$, of two (sets of) variables. Here $t$ is the time, or the parameter of the flow the equations define, and $x$ is the so-called \emph{independent (or space) variable}. We call $\ell$ the number of \emph{components} of the system.

An evolutionary system of differential equations 
\begin{equation}\label{eq:evo-intro}
\dev_t u^i=F^i(u,\dev u,\dev^2 u,\ldots)
\end{equation}
is said to be Hamiltonian for a (Hamiltonian) functional $H$ if it can be written in terms of Poisson brackets $\{\cdot,\cdot\}$ as
\begin{equation}\label{eq:evobracket-intro}
\dev_t u^i=\{u^i,H\}
\end{equation}
or, \emph{equivalently}, in terms of Hamiltonian structure $K$ as
\begin{equation}\label{eq:evooperat-intro}
\dev_tu^i=K\delta H.
\end{equation}
Here $\dev$ in \eqref{eq:evo-intro} is the partial derivation with respect to the space variables (that can be either one or several) acting on the dependent variables $u^j$. $\delta H$ is the variational derivative and the operator $K$ can be a differential or pseudo-differential operator, depending on the system under investigation.

The prototypical example is KdV equation
$$
\dev_t u = 6u u_x+u_{xxx}.
$$
It is Hamiltonian with respect to the operator $K=\dev_x$ and the local functional
$$
H=\int{u^3-\frac{u_x^2}{2}}.
$$

The study of Hamiltonian operators is particularly important in the theory of integrable systems and in deformation quantisation. It is well known, for instance, that Magri \cite{M78} introduced the concept of compatible pair of Poisson brackets (Hamiltonian structure) and related it to the complete integrability of systems of partial differential equations.

The notion of \emph{Poisson} (or Poisson-Lichnerowicz) \emph{cohomology} carries a lot of information about the properties of a Poisson bracket, or equivalently -- in the infinite dimensional setting -- of a Hamiltonian structure. On finite dimensional manifolds, Poisson brackets are identified with Poisson bivectors \cite{L77}. Such bivectors can be used to define a differential on the complex of multivectors, whose cohomology is the Poisson cohomology. It provides information about the center of the Poisson algebra (the Casimir function), its symmetries, and the compatible bivectors that can be defined on the same manifold.

Hamiltonian structures as the ones we have introduced can be interpreted as Poisson bivectors defined on some infinite dimensional manifold; this analogy is suggested by the fact that they define Poisson brackets with similar properties to the ones used for finite-dimensional systems; we then similarly define the complex of multivector fields and the Poisson cohomology.

For Poisson bivectors defined by differential operators, as the one we use for KdV equation, the main result has been proved by Getzler \cite{G01}.

He considers Hamiltonian operators of first order, for systems with one space variable; Dubrovin and Novikov have proved long ago \cite{DN83} that there always exist a system of coordinates for which such operators have the constant form
$$
K=K^{ij}\dev_x, \qquad\qquad K^{ij}=K^{ji}.
$$
for $i,j=1,\ldots,\ell$.
Getzler's theorem states that $H^p(K)=0$ for $p\geq 1$; in particular, the vanishing of the second and third cohomology group allows a complete classification of higher order compatible Hamiltonian operators, since they are all equal to $K$ after a generalised change of coordinates called a Miura transformation.

It is, for instance, well known that the second Hamiltonian structure for KdV equation
$$
K_2=4 u \dev_x+2u_x+\dev_x^3
$$
can be obtained by the first one $K=\dev_x$ after the Miura transformation
$$
u\mapsto v=u^2+ \I u_x.
$$
The scalar differential case depending on several independent variables has been addressed in \cite{CCS15,CCS18}: in this case the Poisson cohomology is infinite-dimensional, however its highly non-trivial third group imposes enough constraints to classify all the compatible Hamiltonian operators up to an arbitrary order.
\medskip
A natural extension from the continuous to the discrete setting is to study differential-difference systems. The basic example is the Volterra chain equation \cite{V31}; we have a function $u(n,t)$ of a lattice variable $n\in\Z$ and of the time, solution of the equation 
$$\dev_t \,u(n,t)=u(n,t)\left(u(n+1,t)-u(n-1,t)\right).$$
By denoting $u(n,t):=u_0=u, u(n+m,t):=u_m$ and introducing the shift operator $\S f(u,u_1,\ldots,u_n)=f(u_1,u_2,\ldots,u_{n+1})$ we can write the Volterra equation as
$$
\dev_t u=u(u_1-u_{-1}).
$$
This equation can be cast in Hamiltonian form defining the Hamiltonian \emph{difference} operator 
$$K=uu_1\S-u u_{-1}\S^{-1}$$
and the functional $H=\int u$, so that $\delta H=1$.

The foundations of calculus for difference operators have been developed by Kupershmidt \cite{Ku85}, and can be read alongside the better-known formal calculus of variations introduced by Gel'fand and Dikii \cite{GD75} for systems of PDEs. The notion of Poisson bivector is defined within this framework; however, we can define a tailored version of the $\theta$ formalism (on the lines of what Getzler did for the differential case) which allows to deal in a more efficient way with the complex of multivector fields. Moreover, De Sole, Kac, Valeri and Wakimoto have recently introduced the notion of multiplicative Poisson vertex algebras \cite{dSKVW18} which is yet another equivalent formulation and which can be effectively used for explicit computations.

The main purpose of this paper is the extension to the difference case of the notion of Poisson cohomology of a Hamiltonian structure. In this context,   
Hamiltonian structures are given by difference operators (we call them, in analogy with the differential case, \emph{local} operators), or ratios of difference operators \cite{CMW18-2}. Our principal result is the computation of the Poisson cohomology for a scalar, order $(-1,1)$ difference Hamiltonian operator. We obtain, although in a very different context and with more modern techniques, an analogue of Getzler's result.
\begin{theorem}\label{main}
	Let us consider the scalar difference Hamiltonian operator $K=\S-\S^{-1}$. Its Poisson cohomology is 
	$$H^p(K)=0\qquad\qquad \forall\, p>1.$$
	Moreover,
	\begin{align}
	H^0(K)&=\left\{\int \alpha+\int\beta u\; \Big|\; (\alpha,\beta)\in\C^2\right\}\\
	 H^1(K)&=\left\{\int\gamma\frac{\delta}{\delta u}\;\Big|\;\gamma\in\C\right\}
	\end{align}
\end{theorem}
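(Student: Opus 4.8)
The plan is to translate the Poisson cohomology of $K=\S-\S^{-1}$ into the cohomology of the complex of local multivectors in the $\theta$-formalism. Here $\hF$ denotes the space of local functionals in the even variables $u_m=\S^m u$ and the odd variables $\theta_m=\S^m\theta$, graded by the number $p$ of $\theta$'s (the cohomological degree), and the differential is the Schouten bracket $d_K=\{P_K,\cdot\}$ with the bivector $P_K=\tfrac12\int\theta(\S-\S^{-1})\theta=\int\theta\,\theta_1$ representing $K$. The crucial simplification is that $K$ has \emph{constant coefficients}, so $P_K$ does not involve the $u_m$ and $\delta P_K/\delta u=0$; the differential therefore collapses to the single term
\begin{equation}
d_K X=\int(\S-\S^{-1})\theta\,\frac{\delta X}{\delta u}=\int(\theta_1-\theta_{-1})\frac{\delta X}{\delta u}.
\end{equation}
Since $\delta/\delta u$ lowers the polynomial degree in the $u_m$ by one while multiplication by $\theta_1-\theta_{-1}$ raises the $\theta$-degree by one, $d_K$ is bihomogeneous of bidegree $(+1,-1)$ with respect to the bigrading by ($\theta$-degree $p$, $u$-degree $q$). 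This is the structural fact I would exploit throughout.

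Because $d_K$ is bihomogeneous, the complex splits as a direct sum of the finite subcomplexes
\begin{equation}
C^\bullet_N:\quad \hF^{(0,N)}\xrightarrow{\,d_K\,}\hF^{(1,N-1)}\xrightarrow{\,d_K\,}\cdots\xrightarrow{\,d_K\,}\hF^{(N,0)},
\end{equation}
one for each $N=p+q\geq0$, and $H^p(K)=\bigoplus_N H^p(C^\bullet_N)$. The two lowest pieces give the explicit cohomology. For $N=0$ the complex reduces to $\hF^{(0,0)}\cong\C$ sitting in degree $0$, contributing the classes $\int\alpha$. For $N=1$ the complex is $\hF^{(0,1)}\to\hF^{(1,0)}$; one checks directly that $d_K$ annihilates every $u$-degree-one density (indeed $\delta f/\delta u$ is then a constant $\beta$ and $\int(\theta_1-\theta_{-1})\beta=0$), so $H^0(C^\bullet_1)=\{\int\beta u\}$, while the cokernel $\hF^{(1,0)}/\im d_K=\{\int\gamma\theta\}$ is non-trivial because a nonzero constant lies outside $\im(\S-\S^{-1})$; identifying the one-vector $\int\gamma\theta$ with $\int\gamma\,\delta/\delta u$ yields the stated $H^1(K)$.

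The heart of the theorem is then the claim that $C^\bullet_N$ is \emph{exact} for every $N\geq2$, which is equivalent to $H^p(K)=0$ for $p>1$ together with the vanishing of any further contribution to $H^0$ and $H^1$. I would prove this by constructing an explicit contracting homotopy $h$ of bidegree $(-1,+1)$ with $d_Kh+hd_K=\mathrm{id}$ on $C^\bullet_N$ for $N\geq2$. Morally $h$ has to undo the two operations making up $d_K$: contract one factor $\theta_1-\theta_{-1}$ and integrate back one power of $u$, which amounts to inverting the difference operator $\S-\S^{-1}$ on the space of $\theta$-polynomials modulo total differences. The factorization $\S-\S^{-1}=\S^{-1}(\S-1)(\S+1)$ is what makes this feasible: $\S+1$ is invertible on this space, and $\S-1$ has only the one-dimensional kernel and cokernel spanned by the constants; hence a right inverse exists precisely on the complement of the constants. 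The obstruction living in those one-dimensional spaces is exactly what produced the non-trivial classes at $N=0,1$, and for $N\geq2$ there is enough room in the $\theta$-sector for it to be absent, so the homotopy is defined on all of $C^\bullet_N$.

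The step I expect to be the main obstacle is the construction and verification of this homotopy in the $\theta$-formalism. Unlike the differential case treated by Getzler, where the homotopy for $\dev_x$ on the variational complex is classical, here one must invert the genuinely two-sided difference operator $\S-\S^{-1}$ while simultaneously respecting the quotient by total differences, namely $\im(\S-1)$, and the antisymmetry of the $\theta_m$; keeping these compatible is delicate. Concretely, the difficulty is to produce a formula for $h$ that is well defined on equivalence classes of densities, that manifestly satisfies $d_Kh+hd_K=\mathrm{id}$ after the unavoidable integrations by parts, and that does so uniformly in $p$ for all $N\geq2$. I would approach it by first solving the problem on representatives with a fixed, normalized support in the lattice variable, using the invertibility of $\S+1$ to define the contraction and carefully tracking the constants killed by $\S-1$, and only then descending to $\hF$. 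Once the homotopy is in place the three assertions of the theorem follow at once from the degreewise decomposition.
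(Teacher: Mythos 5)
Your setup is correct: the bivector is $P=\int\theta\theta_1$, the differential collapses to $X\mapsto\int(\theta_1-\theta_{-1})\,\delta X/\delta u$, and the classes $\int\alpha$, $\int\beta u$, $\int\gamma\theta$ are correctly identified. But the heart of the theorem --- the vanishing of $H^p$ for $p>1$, which in your scheme is the exactness of $C^\bullet_N$ for $N\geq 2$ --- is not proved: you only announce a contracting homotopy on the quotient complex $\hF$ and yourself flag its construction as the main obstacle. That is precisely the hard point, and the heuristic offered (invert $\S+1$, track a one-dimensional kernel and cokernel of $\S-1$) is not a formula, let alone a verification of $d_Kh+hd_K=\mathrm{id}$ on equivalence classes modulo $(\S-1)\hA$. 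Homotopies built from antiderivatives and $\theta$-contractions generically do \emph{not} descend to the quotient by total differences, which is exactly why the paper does not attempt this. Instead it works upstairs on $\hA$, where $D_P=\sum_n(\theta_{n+1}-\theta_{n-1})\partial/\partial u_n$ is a de Rham-type operator admitting a clean homotopy (Lemma \ref{poincare}, giving $H(D_P,\hA)=\C[\theta,\theta_1]$, four-dimensional), and then transfers the result to $\hF$ through the long exact sequence induced by $0\to\hA/\C\to\hA\to\hF\to 0$. The connecting homomorphism and the induced action of $\S-1$ on $H(\hA)$ are what produce the extra class $\int u$ in $H^0$, force $H^2(\hF)=0$ (because $\S-1$ is injective on $H^2(\hA)$), and give $H^p(\hF)=0$ for $p>2$ essentially for free.

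A second, independent problem: your direct-sum decomposition $\bigoplus_N C^\bullet_N$ uses the grading by polynomial degree in the $u_m$, which exists only when $\A$ is the polynomial algebra $\mathcal{P}_1$. The theorem is stated for a general algebra of difference functions, and the paper's applications use densities such as $e^{v}$, so the bigrading is simply unavailable in the generality required. (A minor further point: the image of $d_K\colon\hF^{(0,1)}\to\hF^{(1,0)}$ is zero outright, since $\int(\theta_1-\theta_{-1})\beta=0$ already in $\hF$, so the cokernel is all of $\hF^{(1,0)}\cong\C$; your conclusion there is right but the stated reason is beside the point.) If you want to salvage the homotopy strategy, the lesson of the paper is to construct the homotopy on $\hA$, where it is elementary, and let the long exact sequence handle the descent to $\hF$.
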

The $0$-th cohomology group identifies the Casimir functionals of the Poisson bracket defined by $K$. The first cohomology is given by the evolutionary vector fields which are symmetries of the Hamiltonian structure $K$ but are not obtained as Hamiltonian flows. For the classification of compatible Hamiltonian structures, the important part of this result is the vanishing of the second and third cohomology: as we will discuss in Section \ref{sec:applic}, this replicates Getzler's result that any Hamiltonian structure compatible with $K$ is given by a Miura-type transformation of $K$ itself.

\medskip
The paper is organized as follows: in Section \ref{sec:formalism} we recall the formal calculus of variations for algebras of difference functions, introduced by Kupershmidt, and we define a ``difference'' $\theta$ formalism to describe the space of multivector fields, the Poisson bivector and its associated cohomology. In Section \ref{sec:Poisson} we specialise to the case of a single dependent variable and of a first order Hamiltonian structure: we compute its normal form and we prove our main Theorem. In Section \ref{sec:applic} we demonstrate a few applications of the main theorem to the classification of local Hamiltonian difference operators, observing that many of the ones described by De Sole, Kac, Valeri and Wakimoto \cite{dSKVW18} can be reduced to the constant $(-1,1)$ order form by a suitable change of coordinates. As a byproduct, we observe that all the compatible pairs of Hamiltonian operators listed in the classification produce the Volterra chain hierarchy. In Section \ref{sec:stretch} we generalise some of the results obtained for the Poisson cohomology to higher order constant Hamiltonian operators, obtaining an upper bound for the dimension of the cohomology groups. The triviality result for the compatible deformations of the operators does not extend to higher order operators.

\section{Functional variational calculus and deformations in the difference case}
\label{sec:formalism}

In this section we revise the formal calculus of variations in the difference-differential setting as originally laid out by Kupershmidt \cite{Ku85}, according to the more modern exposition of \cite{CMW18}. Moreover, we extend the so-called $\theta$ \emph{formalism} to the (difference) local multivector fields. The $\theta$ formalism for differential multivector fields was introduced by Getzler \cite{G01}. It is based on the observation that the algebra of multivector fields with its Schouten-Nijenhuis bracket can be defined in terms of an odd symplectic supermanifold \cite{Le77} and on Soloviev's definition of the Schouten bracket for a field theory \cite{so93}. It proved itself extremely useful to compute the Poisson cohomology for operators of differential type \cite{CPS16,CCS15,CCS18}. 

We introduce the basic notions of local multivectors,  Schouten-Nijenhuis brackets, and $\theta$ formalism in the general $\ell\in\mathbb{N}$ component case, which is the original Kupershmidt's setting. We will later specialise to the scalar (namely, $\ell=1$) case.

\subsection{Algebra of difference functions}

Let $(\mathcal{P}_\ell,\S)$ the algebra of polynomials over the field $\C$ in the variables $u^i_n$, $i\in\{1,\ldots,\ell\},\,n\in\Z$, endowed with an automorphism $\S$, defined by $\S u^i_n=\S u^i_{n+1}$ $\forall n$. It satisfies the following relation
\begin{equation}\label{eq:commder}
\S\frac{\dev}{\dev u^i_n}=\frac{\dev}{\dev u^i_{n+1}}\S,\quad i\in\{1,\ldots,\ell\},\,n\in\Z.
\end{equation}
\begin{definition}[\cite{dSKVW18}]
An \emph{algebra of difference functions} is a commutative associate unital algebra $(\A,\S)$, containing $\mathcal{P}_\ell$, endowed with commuting derivations $\frac{\dev}{\dev u^i_n}$ extending the ones in $\mathcal{P}_\ell$ and an automorphism $\S$ extending that on $\mathcal{P}_\ell$, such that the following two properties hold:
\begin{enumerate}
	\item $\frac{\dev f}{\dev u^i_n}=0$ for all but finitely many pairs $(i,n)$;
	\item Property \eqref{eq:commder} holds.
\end{enumerate}
\end{definition}
Moreover, we denote $\mathcal{C}=\{f\in\A\, |\, \S f=f\}$ the subalgebra of constants and $\bar{\mathcal{C}}=\{f\in\A\, |\, \frac{\dev f}{\dev u^i_n}=0\ \forall(i,n)\}$ the subalgebra of quasi-constants. Note that a non-constant quasi-constant is an element of $\A$ with only an explicit dependence (not through the variables $u^i_n$) on the lattice variable $n$.

For the \emph{scalar} difference functions we will focus on in this article, we will consider $\ell=1$ and $\bar{\mathcal{C}}=\mathcal{C}=\C$.

The elements of the quotient
\begin{equation}
\F = \frac{\A}{(\S-1)\A}
\end{equation}
are called local functionals. In particular, in $\F$ we have $\S f \sim f$.

We denote the projection map from $\A$ to $\F$ as a formal integral, which associates to $f\in\A$ the element $F:=\int f$ in $\F$.

The variational derivative of  a local functional $F=\int f$ is defined as 
\begin{equation} \label{varu}
\frac{\delta F}{\delta u^i} =\delta_{u^i}F:= \sum_{n\in\Z} \S^{-n} \frac{\partial f}{\partial u^i_n} .
\end{equation}
From the property (1) in the definition of the algebra of difference functions, this sum is always finite.

\begin{proposition} For any $f\in\A$,
\begin{equation} \label{varpa}
\frac{\delta}{\delta u^i} (\S-1)f =0 .  
\end{equation}
\end{proposition}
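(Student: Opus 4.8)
The plan is to compute $\frac{\delta}{\delta u^i}(\S-1)f$ directly from the definition \eqref{varu} and exhibit a cancellation. Writing $g=(\S-1)f=\S f-f$ and using the linearity of each $\frac{\partial}{\partial u^i_n}$, I would split the variational derivative as
\[
\frac{\delta}{\delta u^i}(\S-1)f=\sum_{n\in\Z}\S^{-n}\frac{\partial(\S f)}{\partial u^i_n}-\sum_{n\in\Z}\S^{-n}\frac{\partial f}{\partial u^i_n},
\]
where the second sum is exactly $\delta_{u^i}\int f$. Both sums are finite by property (1) in the definition of an algebra of difference functions, so every manipulation below is legitimate.

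The crucial step is to rewrite the first sum using the commutation relation \eqref{eq:commder}. Reading that relation as $\frac{\partial}{\partial u^i_{n+1}}\S=\S\frac{\partial}{\partial u^i_n}$ and applying it to $f$ gives, after shifting the index, $\frac{\partial(\S f)}{\partial u^i_n}=\S\frac{\partial f}{\partial u^i_{n-1}}$. Substituting this and pulling the single $\S$ out of $\S^{-n}$ yields
\[
\sum_{n\in\Z}\S^{-n}\frac{\partial(\S f)}{\partial u^i_n}=\sum_{n\in\Z}\S^{-n}\S\frac{\partial f}{\partial u^i_{n-1}}=\sum_{n\in\Z}\S^{-(n-1)}\frac{\partial f}{\partial u^i_{n-1}}.
\]

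Finally I would re-index the last sum by $m=n-1$; because the summation ranges over all of $\Z$ but has only finitely many nonzero terms, this shift is harmless, and the sum becomes $\sum_{m\in\Z}\S^{-m}\frac{\partial f}{\partial u^i_m}=\delta_{u^i}\int f$, which is identical to the second sum in the first display. The two contributions therefore cancel and the expression vanishes, as claimed. There is no genuine obstacle here: the computation is a clean telescoping once \eqref{eq:commder} is applied. The only points requiring care are using the commutation relation with the correct index shift and invoking the finiteness in property (1) to justify that the bi-infinite sum is effectively finite, so that the reindexing does not drop any boundary term.
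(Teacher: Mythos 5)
Your computation is correct and is essentially identical to the paper's own proof: both split $\delta_{u^i}(\S-1)f$ into two sums, use the commutation relation \eqref{eq:commder} to convert $\S^{-n}\frac{\partial}{\partial u^i_n}\S f$ into $\S^{-(n-1)}\frac{\partial f}{\partial u^i_{n-1}}$, and reindex to obtain the cancellation. Your added remark on finiteness of the sums justifying the reindexing is a welcome point of care but does not change the argument.
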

\begin{proof}
	This proposition is a special case of the deeper theorem $\ker\delta=\bar{\mathcal{C}}+(\S-1)\A$ (see \cite{Ku85,dSKVW18}, to which we refer for the complete proof). For our purposes, the proposition follows from the simple computation involving relation \eqref{eq:commder}
	\begin{equation}
	\begin{split}
	\frac{\delta}{\delta u^i}(\S-1)f&=\sum_{n\in \Z}\S^{-n}\frac{\dev }{\dev u^i_n}\S f-\sum_{n\in \Z}\S^{-n}\frac{\dev f}{\dev u^i_n}\\
	&=\sum_{n\in \Z}\S^{-n+1}\frac{\dev f}{\dev u^i_{n-1}}-\sum_{n\in \Z}\S^{-n}\frac{\dev f}{\dev u^i_n}\\
	&=\sum_{n'\in \Z}\S^{-n'}\frac{\dev f}{\dev u^i_{n'}}-\sum_{n\in \Z}\S^{-n}\frac{\dev f}{\dev u^i_n}=0.
	\end{split}
	\end{equation}
\end{proof}
From this proposition in particular it follows that the variational derivative of $F$ does not depend on the choice of the density $f$.

\subsection{Local multivectors}\label{sec:polyv}
A local $p$-vector $B$ is a linear $p$-alternating map from $\F$ to itself of the form
\begin{equation} \label{pvect}
B(I_1, \dots ,I_p) = \int B^{i_1 , \dots , i_p}_{n_1, \dots , n_p} \ \S^{n_1}  \left( \frac{\delta I_1}{\delta u^{i_1}} \right) \cdots  \S^{n_p} \left( \frac{\delta I_p}{\delta u^{i_p}} \right)
\end{equation}
where $B^{i_1 , \dots , i_p}_{n_1, \dots , n_p}  \in \A$, for arbitrary $I_1, \dots, I_p \in \F$. We denote the space of local $p$-vectors by $\Lambda^p \subset \mathrm{Alt}(\F^p, \F)$.

\medskip
We look at multivectors with lower $p$. A $0$-vector is a local functional. A local $1$-vector is a linear map
$$
\hat{X}(F)=\int X^i \frac{\delta F}{\delta u^i}.
$$
Indeed, from the following ``integration by parts'' rule
\begin{equation}\label{eq:intparts}
\int \left(\S^n f\right)\left(\S^m g\right)=\int f \S^{m-n}g
\end{equation}
it follows that
\begin{equation}\label{eq:1vfield}
\int B^i_n\S^n\frac{\delta F}{\delta u^i}=\int \left(\S^{-n}B^i_n\right)\frac{\delta F}{\delta u^i}:=\int X^i\frac{\delta F}{\delta u^i}.
\end{equation}
A difference \emph{evolutionary} vector field is a derivation of the algebra $\A$ that commutes with $\S$ and it is trivial on $\bar{\mathcal{C}}$. An evolutionary vector field of characteristic $X^i$, $i=\{1,\ldots,\ell\}$ is of the form
\begin{equation}\label{eq:evvfield}
\bar{X}(f)=\sum_{i=1}^N\sum_{n\in\Z}\S^n \left(X^i\right)\frac{\dev f}{\dev u^i_n}.
\end{equation}
There is a one-to-one correspondance between evolutionary vector fields and local 1-vectors. Any local 1-vector defines an evolutionary vector field of characteristic $X$ as in \eqref{eq:1vfield}. Conversely, for an evolutionary vector field $\bar{X}$ we have $\bar{X}(\S f)=\S(\bar{X}(f))$. From this it follows that the map
$$
\hat{X}\left(\int f\right):=\int \bar{X}(f),
$$
associating a local 1-vector $\hat{X}$ to any evolutionary vector field $\bar{X}$ is well defined on local functionals, since
$$
\int\bar{X}\left(f+(\S-1) g\right)=\int\left(\bar{X}(f)+(\S-1)\bar{X}(g)\right)=\int\bar{X}(f).
$$

\medskip
Local $2$-vector field can be identified with skewsymmetric operators, or brackets.

\begin{definition}
	A \emph{local (scalar) difference operator} $K$ is an element of $\A[\S,\S^{-1}]$, namely a finite sum of the form
	$$
	K=\sum_{n=M}^N a^{(n)} \S^n,
	$$
	with $a^{(n)}\in\A$, $M\leq N\in\Z$. We call the pair $(M,N)$ the \emph{order} of the operator.
\end{definition}

Let us consider a difference-operator valued matrix $K\in\mathrm{Mat}_\ell(\A[\S,\S^{-1}])$. It defines a bilinear operation, or a \emph{bracket}, among local functional $F,G$ by
\begin{equation}\label{eq:bracketdef}
\{F,G\}:=\int\sum_{i,j=1}^\ell \frac{\delta F}{\delta u^i} K^{ij}\frac{\delta G}{\delta u^j}.
\end{equation}
We say that the bracket is skewsymmetric if $\{F,G\}=-\{G,F\}$. Moreover, with a slight abuse of terminology we say that an operator is skewsymmetric if the bracket defined as in \eqref{eq:bracketdef} is skewsymmetric. Scalar local difference operator with this property (such as a Hamiltonian one) must be of order $(-N,N)$ for some $N>0$.

We identify a skewsymmetric bracket with a local 2-vector field by direct comparison with the definition in equation \eqref{pvect}. 

Indeed, let $K$ be a matrix of difference operators $K^{ij}=K^{ij}_{(n)}\S^n$ for $i,j=1,\ldots,\ell$ and $n\in\Z$. Then the bracket $\{F,G\}$ defined in \eqref{eq:bracketdef} corresponds to the bivector 
$$
B(F,G)=\int\sum_{i,j=1}^\ell\sum_{m,n\in\Z} B^{ij}_{m,n}\S^m\left(\frac{\delta F}{\delta u^i}\right)\S^n\left(\frac{\delta G}{\delta u^j}\right)
$$
with $B^{ij}_{0,n}=K^{ij}_{(n)}$ and $B^{ij}_{m,n}=0$ $\forall m\neq 0$.

We call a difference operator a \emph{Hamiltonian operator} if the bracket it defines is skewsymmetric and fulfils the Jacobi identity.

The property of being an Hamiltonian operator is usually expressed in terms of Fr\'echet derivative -- once we have defined the difference analogous of the Fr\'echet derivative for a differential operator, the property reads the same as the one given by Dorfman \cite{Do93}. 

The notion of multiplicative Poisson vertex algebras has been recently introduced by De Sole, Kac, Valeri and Wakimoto \cite{dSKVW18} as the algebraic structure underlying differential-difference Hamiltonian equations. For our purposes, we regard it as an equivalent definition of Hamiltonian operator. However, it is a convenient framework to perform explicit computations throughout our paper. To be self-contained, the complete definition and the exact terms of the equivalence are illustrated in Appendix \ref{app:mPVA}.

\subsection{The $\theta$ formalism}
The so-called $\theta$ formalism is an equivalent and computational-effective way to define local multivector fields in the theory of the formal calculus of variations and their Schouten bracket. In this section we provide a version tailored on the difference case, namely when the densities of local functionals are difference functions.

Let $\hA$ be the algebra of difference functions in the commutative variables $u^i_n$, $n\in\Z$ and of polynomials in the anticommutative variables 
$\theta_{i,n}$, i.e.,
\begin{equation}
\hA := \A[\left\{ \theta_{i,n},\, i\in\{1,\ldots,\ell\},\,n\in\Z \right\}] .
\end{equation}
$\hA$ is a graded algebra according to the super gradation $\deg_\theta$, by setting
\begin{equation} 
\deg_\theta u^i_n = 0 , \qquad \deg_\theta \theta_{i,n} = 1 .
\end{equation}
We denote $\hA^p$, the homogeneous components of $\hA$ with $\theta$-degree $p$. Clearly $\hA^0 =\A$.

The automorphism $\S$ is extended to $\hA$ by 
\begin{equation}
\S \theta_{i,n}=\theta_{i,n+1}.
\end{equation}
Moreover, $\ker(\S-1)$ on $\hA$ is $\bar{\mathcal{C}}$.

We denote by $\hF$ the quotient of $\hA$ by the subspace $(\S-1)\hA$, and by the integral operator $\int$ the projection map from $\hA$ to $\hF$. Since $\deg_\theta\S f=\deg_\theta f$, $\hF$ inherits the supergradation of $\hA$.  

Equation \eqref{varpa} holds on $\hA$ and, similarly,
\begin{equation}\label{commupatheta}
\S\frac{\dev}{\dev\theta_{i,n}}=\frac{\dev}{\dev\theta_{i,n+1}}\S.
\end{equation}
It follows that the $\theta$ variational derivative
\begin{equation} \label{varde}
\frac{\delta}{\delta \theta_i} =\delta_{\theta_i}:= \sum_{n\in\Z} \S^{-n} \frac{\partial }{\partial \theta_{i,n}} 
\end{equation}
satisfies
\begin{equation}\label{varderpa}
\frac{\delta}{\delta \theta_i} (\S-1) = 0.
\end{equation}
Hence both variational derivatives~\eqref{varpa} and~\eqref{varde} define  maps from $\hF$ to $\hA$.

\begin{proposition}
The space of local multi-vectors $\Lambda^p$ is isomorphic to $\hF^p$.
\end{proposition}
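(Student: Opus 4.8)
The plan is to construct an explicit $\C$-linear map $\Phi$ from $\hA^p$ into the alternating maps $\mathrm{Alt}(\F^p,\F)$, to show that it annihilates $(\S-1)\hA^p$ so that it descends to $\bar\Phi\colon\hF^p\to\Lambda^p$, and finally to prove that $\bar\Phi$ is a bijection. The map is the substitution that turns each anticommuting generator into a shifted variational derivative. Writing a $\theta$-degree $p$ density as $\omega=\sum B^{i_1\cdots i_p}_{n_1\cdots n_p}\,\theta_{i_1,n_1}\cdots\theta_{i_p,n_p}$ with $B^{i_1\cdots i_p}_{n_1\cdots n_p}\in\A$, I would set
\[
\Phi(\omega)(I_1,\dots,I_p):=\int \sum B^{i_1\cdots i_p}_{n_1\cdots n_p}\,\S^{n_1}\!\Big(\frac{\delta I_1}{\delta u^{i_1}}\Big)\cdots\S^{n_p}\!\Big(\frac{\delta I_p}{\delta u^{i_p}}\Big),
\]
which is exactly the shape of \eqref{pvect}, so the image lies in $\Lambda^p$. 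Because the $\theta_{i,n}$ anticommute while the variational derivatives commute, a transposition of two arguments $I_k$ is compensated by the antisymmetry of the coefficients, and $\Phi(\omega)$ is automatically $p$-alternating; this is the structural reason why the odd variables encode the alternating property.

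For the descent I would check that $\Phi(\S\eta)=\Phi(\eta)$ for every $\eta\in\hA^p$. Applying $\S$ replaces each coefficient $B$ by $\S B$ and raises every lattice index by one; pulling the common shift outside the integral as $\int\S(\,\cdot\,)=\int(\,\cdot\,)$ — that is, using rule \eqref{eq:intparts} together with the fact that $\S$ is an automorphism — returns the original expression. Hence $\Phi$ kills $(\S-1)\hA^p$ and induces $\bar\Phi\colon\hF^p\to\Lambda^p$. Surjectivity is then immediate: given any local $p$-vector written as in \eqref{pvect}, the density $\omega=\sum B^{i_1\cdots i_p}_{n_1\cdots n_p}\theta_{i_1,n_1}\cdots\theta_{i_p,n_p}$ (whose coefficients may be taken totally antisymmetric, since the $\theta$-monomial already is) satisfies $\bar\Phi(\int\omega)=B$.

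Injectivity is where the real work lies. Assuming $\bar\Phi(\int\omega)=0$, I would first use the shift freedom in $\hF^p$, namely $\S f\sim f$, to reduce $\omega$ to a normal form in which, in each monomial, the smallest lattice index is set to $0$; this is the many-index analogue of the reduction $B^{ij}_{m,n}\to B^{ij}_{0,n}$ already used to identify bivectors with brackets, and it fixes the gauge of the $(\S-1)$ action. It then remains to show that a reduced density inducing the zero map must vanish. To recover the coefficients I would evaluate $\bar\Phi(\int\omega)$ on families of test functionals $I_1,\dots,I_p$ chosen so that their variational derivatives realise prescribed local densities, reading off each $B^{i_1\cdots i_p}_{n_1\cdots n_p}$ from the resulting integral by the nondegeneracy of the pairing \eqref{eq:intparts}.

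The main obstacle is precisely this last step: the arguments of a multivector are not arbitrary elements of $\A$ but only variational derivatives, i.e.\ elements of the image of $\delta$, so I cannot separate densities by pairing against all of $\A$. The remedy is the fundamental lemma of the difference calculus of variations, $\ker\delta=\bar{\mathcal{C}}+(\S-1)\A$, quoted after \eqref{varpa}: it guarantees that the image of $\delta$ is large enough to detect any density modulo total differences $(\S-1)\A$, which is exactly the ambiguity already removed in passing to $\hF^p$. Combined with the reduced normal form, this forces the only reduced $\omega$ with $\bar\Phi(\int\omega)=0$ to be $\omega=0$, establishing injectivity and hence that $\bar\Phi$ is the desired isomorphism. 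The case $p=1$ of the argument is already carried out explicitly around \eqref{eq:1vfield}, where the characteristic $X^i$ is uniquely determined; the general case is its multilinear, antisymmetric extension.
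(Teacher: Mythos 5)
Your construction of the map, its descent to $\hF^p$ (via $\Phi(\S\eta)=\Phi(\eta)$ and \eqref{eq:intparts}), and the surjectivity argument coincide with the paper's proof: the paper's $\iota$ in \eqref{defIota} is exactly your substitution, written with $\theta$-derivatives. The gap is in the injectivity step. The nondegeneracy you actually need is: if $\int X^i\,\delta_{u^i}F\sim 0$ for \emph{every} local functional $F$, then $X^i=0$. This does \emph{not} follow from the fundamental lemma $\ker\delta=\bar{\mathcal{C}}+(\S-1)\A$ that you invoke: that lemma characterises the \emph{kernel} of $\delta$, i.e.\ it tells you when $\delta_{u}I$ vanishes, and says nothing about whether the \emph{image} of $\delta$ separates densities under the pairing $\int fg$. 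The image of $\delta$ is a proper subspace of $\A$ (its elements satisfy a Helmholtz-type self-adjointness constraint), so you cannot ``choose test functionals so that their variational derivatives realise prescribed local densities''; the needed separation property is a genuinely separate statement, which the paper imports as Kupershmidt's Lemma 17.

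For $p\geq2$ there is a second, related issue: reading off all coefficients at once by multilinear testing does not reduce cleanly to the $p=1$ statement, again because the admissible test objects in each slot are constrained. The paper instead normalises the representative to $\frac1p\theta_i\delta_{\theta_i}B$ and peels off one argument at a time, writing $\iota(B)(I_1,\dots,I_p)\sim\delta_{u^{i_1}}I_1\cdot\iota\bigl(\delta_{\theta_{i_1}}B\bigr)(I_2,\dots,I_p)$, so that the $p=1$ nondegeneracy applied in the first slot forces $\delta_{\theta_i}B=0$ and hence $B\sim0$. Your normal form (smallest lattice index set to zero) is a workable substitute for that normalisation, but the inductive peeling step together with the correct nondegeneracy lemma is what actually closes the argument; as written, your last paragraph asserts the conclusion rather than proving it.
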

\begin{remark} 
A proof of this Proposition for the differential case is given in~\cite{lz11}. In the difference case the proof is simpler and it relies on a few theorems and lemmas originally proved by Kupershmidt \cite{Ku85}.
\end{remark}
\begin{proof}
For $p=0$, the isomorphism is trivial, since $\hF^0 = \F = \Lambda^0$. Let us assume instead that $p\geq1$. 
Given $B\in\hF^p$, and arbitrary $I_1, \dots, I_p \in \F$, let
\begin{equation}\label{defIota}
\iota (B)(I_1, \dots ,I_p) := \frac{\partial }{\partial \theta_{i_p,n_p}}\cdots \frac{\partial }{\partial \theta_{i_1,n_1}} B \cdot
\left(\S^{n_1} \frac{\delta I_1}{\delta u^{i_1}}\right) \cdots \left(\S^{n_p} \frac{\delta I_p}{\delta u^{i_p}}\right).
\end{equation}
Clearly $\iota(B)$ is an alternating map from $\hF^p$ to $\A$. Moreover it satisfies
\begin{equation}
\iota(\S B)(I_1, \dots , I_p) = \S\big(\iota(B)(I_1, \dots , I_p) \big)\sim\iota(B)(I_1, \dots , I_p).
\end{equation}

We can then define the map $\tilde{\iota}$ from $\hF^p$ to $\Lambda^p$ by 
\begin{equation}
\tilde{\iota}\left( \int B \right) := \int \iota(P).
\end{equation}
Surjectivity of $\tilde{\iota}$ is easy to see; indeed the local $p$-vector~\eqref{pvect} is the image through $\iota$ of
\begin{equation}
B = \frac1{p!}  B^{i_1 , \dots , i_p}_{n_1, \dots , n_p} \theta_{i_1,n_1} \cdots \theta_{i_p,n_p}.
\end{equation}
The injectivity of $\tilde{\iota}$ means that if
\begin{equation}
\iota (B)(I_1, \dots ,I_p)\sim 0 \qquad\forall\; (I_1,\dots,I_p),
\end{equation}
then $B\sim0$. For $p=1$ we have the stronger result that $\iota(B)\sim0$ implies $B=0$. Indeed, we have $\iota(X^i\theta_i)(F)=X^i\delta_{u^i}F\sim0$ for any $\delta F\in\A$, where $X^i\in\A$. This product is nondegenerate and it implies $X^i=0$ \cite[Lemma 17]{Ku85}. For $p\geq2$ we can always choose a representative $\tilde{B}$ of the form $\frac1p\theta_i\delta_{\theta_i}B$ for the element in $\hF^p$, which gives
\begin{equation}
\iota(B)(I_1, \dots ,I_p)\sim\frac{\delta I_1}{\delta u^{i_1}}\,\cdot\,\iota\left(\frac{\delta B}{\delta \theta_{i_1}}\right)(I_2, \dots ,I_p)\sim 0.
\end{equation}
Similarly to before, this means that $\iota(\delta_{\theta_i}B)=0$, which implies $\delta_{\theta_i}B=0$, hence $B\sim0$.
\end{proof}

\subsection{The Schouten-Nijenhuis bracket}
We define in the difference setting the so-called Schouten-Nijenhuis bracket for multivector fields. It is a bilinear map 
\begin{equation}
[,]:\hF^p \times \hF^q \to \hF^{p+q-1}
\end{equation}
defined as  
\begin{equation}\label{defsch}
[P , Q ] = \int\sum_{i=1}^N \left( \frac{\delta P}{\delta\theta_i} \frac{\delta Q}{\delta u^i} + (-1)^p \frac{\delta P}{\delta u^i} \frac{\delta Q}{\delta \theta_i} \right) .
\end{equation}

\begin{proposition}
	The (difference) Schouten-Nijenhuis bracket~\eqref{defsch} satisfies the graded symmetry 
\begin{equation}\label{eq:schsym}
[P,Q] = (-1)^{pq} [Q,P]
\end{equation}
and the graded Jacobi identity
\begin{equation}\label{eq:schjac}
(-1)^{pr} [[P,Q],R] + (-1)^{qp} [[Q,R],P] + (-1)^{rq} [[R,P],Q] =0 
\end{equation}
for arbitrary $P\in\hF^p$, $Q\in\hF^q$ and $r\in\hF^r$. Moreover, it extends the commutator of evolutionary vector fields for the case $p=q=1$.
\end{proposition}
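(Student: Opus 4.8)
The plan is to prove the three assertions in increasing order of difficulty, relying throughout on two elementary tools: the supercommutativity of $\hA$ and integration by parts, i.e.\ the rule \eqref{eq:intparts} together with the fact that $\int$ annihilates $(\S-1)\hA$. For the graded symmetry \eqref{eq:schsym} no integration by parts is actually needed. Writing out $[P,Q]$ and $[Q,P]$ from \eqref{defsch}, I would observe that $\frac{\delta}{\delta\theta_i}$ lowers the $\theta$-degree by one whereas $\frac{\delta}{\delta u^i}$ preserves it, so the four factors that occur have definite parities $p-1,q,p,q-1$. Reordering the products $\frac{\delta Q}{\delta\theta_i}\frac{\delta P}{\delta u^i}$ and $\frac{\delta Q}{\delta u^i}\frac{\delta P}{\delta\theta_i}$ by the Koszul sign rule in $\hA$ and collecting the prefactors $(-1)^p,(-1)^q$ from \eqref{defsch} against the overall $(-1)^{pq}$, one checks that the integrands of $[P,Q]$ and of $(-1)^{pq}[Q,P]$ coincide termwise; thus \eqref{eq:schsym} in fact already holds in $\hA$, before projecting to $\hF$.

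For the extension statement I specialise to $p=q=1$. Using the representatives $P\sim\int X^i\theta_{i,0}$ and $Q\sim\int Y^i\theta_{i,0}$ supplied by the proof of the preceding Proposition, I would compute $\frac{\delta P}{\delta\theta_i}=X^i$ and $\frac{\delta P}{\delta u^i}=\sum_{n}\S^{-n}\!\left(\frac{\dev X^k}{\dev u^i_n}\,\theta_{k,0}\right)$, and likewise for $Q$. Substituting into \eqref{defsch} and relocating every shift onto the $\theta$-free factor by \eqref{eq:intparts}, the bracket collapses to $\int Z^k\theta_{k,0}$ with $Z^k=\sum_{i,n}\left(\S^{n}(X^i)\,\frac{\dev Y^k}{\dev u^i_n}-\S^{n}(Y^i)\,\frac{\dev X^k}{\dev u^i_n}\right)$. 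This is exactly $\bar X(Y^k)-\bar Y(X^k)$, the characteristic of the commutator $[\bar X,\bar Y]=\bar X\bar Y-\bar Y\bar X$ of the evolutionary vector fields \eqref{eq:evvfield}, so under the correspondence of Section~\ref{sec:polyv} the bracket \eqref{defsch} reproduces their commutator.

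The main obstacle is the graded Jacobi identity \eqref{eq:schjac}, which I would prove by direct expansion of the three double brackets. The essential subtlety is that $\frac{\delta}{\delta u^i}$ and $\frac{\delta}{\delta\theta_i}$ are \emph{not} derivations of $\hA$, so $\frac{\delta}{\delta u^j}[P,Q]$ cannot be obtained by a naive Leibniz rule applied to \eqref{defsch}. My plan is instead to differentiate with the genuine derivations $\frac{\dev}{\dev u^i_n},\frac{\dev}{\dev\theta_{i,n}}$, apply the ordinary graded Leibniz rule, and then reassemble variational derivatives while moving every shift $\S^n$ onto a suitable factor via \eqref{eq:intparts}; the commutation rules \eqref{eq:commder} and \eqref{commupatheta} guarantee that the resulting iterated variational derivatives commute up to the Koszul sign. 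The Jacobiator then splits into terms placing one derivative on each of $P,Q,R$ and terms placing two derivatives on a single argument. The first group cancels in the cyclic sum \eqref{eq:schjac} by the antisymmetry of the canonical $u$--$\theta$ pairing combined with the prefactors $(-1)^{pr},(-1)^{qp},(-1)^{rq}$; the second group, which arises precisely because $\delta$ fails to be a derivation, cancels pairwise once the shifts are normalised, thanks to the symmetry of the second variational derivatives. I expect the bulk of the effort, and the real risk of error, to lie in this shift-and-sign bookkeeping rather than in any conceptual difficulty. As a cross-check I would keep in mind the structural picture in which $(u^i_n,\theta_{i,n})$ are canonically conjugate coordinates on an odd symplectic supermanifold and \eqref{defsch} is the induced odd Poisson (Schouten) bracket, so that \eqref{eq:schjac} is forced by the general formalism of Leites and Soloviev; verifying the variational version, however, still reduces to the same integration-by-parts identities, exactly as in Getzler's differential-case argument with $\dev_x$ replaced by $\S$.
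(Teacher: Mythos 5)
Your proposal is correct and follows essentially the same route as the paper: graded symmetry via the Koszul sign rule applied to the parities of $\delta_{\theta_i}$ and $\delta_{u^i}$ (indeed already at the level of $\hA$), the $p=q=1$ case by direct evaluation on representatives $X^i\theta_i$, $Y^j\theta_j$ with shifts relocated by integration by parts, and the Jacobi identity by a direct expansion resting on the symmetry of the iterated partial derivatives in $u^i_n$ and $\theta_{i,n}$. The paper likewise leaves the Jacobi computation as a ``long but straightforward'' bookkeeping exercise, so your more detailed plan for organising that computation is, if anything, slightly more explicit than the published argument.
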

\begin{proof}
Let us first prove that the Schouten bracket reduces to the usual commutator of vector fields in the case $p=q=1$. Two elements of $\hF^1$ are represented by $X^i\theta_i$ and $Y^j\theta_j$ with $X^i,\,Y^j\,\in\,\A$. Their Schouten bracket then reads
\begin{multline}
[X,Y]=\int \left(X^i\frac{(\delta Y^j\theta_j)}{\delta u^i}-Y^i\frac{(\delta X^j\theta_j)}{\delta u^i}\right)=\\
=\int\left(\left(\S^nX^i\right)\frac{\dev Y^j}{\dev u^i_n}-\left(\S^nY^i\right)\frac{\dev X^j}{\dev u^i_n}\right)\theta_j,
\end{multline}
corresponding to the commutator of two evolutionary vector fields. Indeed, we recall that
\begin{equation}
\hat{X}\left(\hat{Y}\left(\int f\right)\right)=\int X^i\frac{\delta}{\delta u^i}\left(\int Y^j\frac{\delta f}{\delta u^j}\right)=\int\S^mX^i\frac{\dev}{\dev u^i_m}\left(\S^n Y^j\frac{\dev f}{\dev u^j_n}\right).
\end{equation}
Taking the difference with the expression with exchanged $\hat{X}$ and $\hat{Y}$ we obtain
\begin{multline}
\int \S^mX^i\frac{\dev}{\dev u^i_m}\left(\S^nY^j\right)\frac{\dev f}{\dev u^j_n}-(X\leftrightarrow Y)=\\
\int\left( \S^{m-n}X^i\right)\frac{\dev Y^j}{\dev u^i_{m-n}}\S^{-n}\frac{\dev f}{\dev u^j_n}-(X\leftrightarrow Y)=\\
\int\left(\left( \S^{n}X^i\right)\frac{\dev Y^j}{\dev u^i_{n}}-\left( \S^{n}Y^i\right)\frac{\dev X^j}{\dev u^i_{n}}\right)\frac{\delta f}{\delta u^j}.
\end{multline}

\medskip

For $P\in\hF^p,\,Q\in\hF^q$, the skewsymmetry follows from the commutation rules of elements of $\hA$, namely $PQ=(-1)^{\deg_\theta P\deg_\theta Q}QP$ and from $\deg_\theta\delta_{\theta}=-1$. We have
\begin{equation}
[Q,P]=\int\sum_{i=1}^N\left((-1)^{p(q-1)}\frac{\delta P}{\delta u^i}\frac{\delta Q}{\delta \theta_i}+(-1)^{q+(p-1)q}\frac{\delta P}{\delta \theta_i}\frac{\delta Q}{\delta u^i}\right)=(-1)^{pq}[P,Q]
\end{equation}
The Jacobi identity is proved by a long but straightforward direct computation, in which the following properties are used:
\begin{equation}
\frac{\dev^2}{\dev u^i_m\dev u^j_n}=\frac{\dev^2}{\dev u^j_n\dev u^i_m},\quad\frac{\dev^2}{\dev u^i_m\dev \theta_{j,n}}=\frac{\dev^2}{\dev \theta_{j,n}\dev u^i_m},\quad\frac{\dev^2}{\dev \theta_{i,m}\dev \theta_{j,n}}=-\frac{\dev^2}{\dev \theta_{j,n}\dev \theta_{i,m}}.
\end{equation}
\end{proof}

\begin{definition}
	A bivector $P\in \hF^2$ is said to be a \emph{Poisson structure} if and only if $[P,P]=0$. Such a condition is sometimes called the \emph{Schouten identity} or the \emph{vanishing of the Schouten torsion}.
\end{definition}
\begin{proposition}\label{thm:OpBiv}
Let $P=\frac12\int P^{ij}_n\theta_i\theta_{j,n}\in\hF^2$ be a Poisson structure. Then the operator
\begin{equation}\label{eq:Hamfrombiv}
K^{ij}:=\left.\left(\frac{\delta P}{\delta \theta_i}\right)\right\vert_{\theta_{j,n}\to\S^n}
\end{equation}
is an Hamiltonian operator. Conversely, given a Hamiltonian difference operator $K^{ij}$, then the bivector
\begin{equation}\label{eq:bivfromHam}
P=\frac12\int \theta_iK^{ij}\theta_j
\end{equation}
is a Poisson structure. With the notation in \eqref{eq:bivfromHam} we mean that the operator $K^{ij}$ is applied to the variable $\theta_j$.
\end{proposition}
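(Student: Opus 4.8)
The plan is to prove the two assertions as a single equivalence, built on the identification of skew-adjoint difference operators with bivectors in $\hF^2$, and treating the skewsymmetry and the Jacobi identity separately.

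First I would settle the algebraic correspondence together with skewsymmetry. Writing a general bivector as $P=\frac12\int P^{ij}_n\theta_i\theta_{j,n}$ and computing the $\theta$-variational derivative \eqref{varde} directly, one finds
\begin{equation}
\frac{\delta P}{\delta\theta_k}=\frac12\sum_n P^{kj}_n\theta_{j,n}-\frac12\sum_m\S^{-m}\!\left(P^{jk}_m\right)\theta_{j,-m},
\end{equation}
so that after the substitution $\theta_{j,n}\to\S^n$ prescribed in \eqref{eq:Hamfrombiv} the operator becomes $K^{kj}=\frac12\bigl(L^{kj}-(L^{jk})^\ast\bigr)$, where $L^{kj}=\sum_n P^{kj}_n\S^n$ and $\ast$ denotes the formal adjoint. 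A short computation then gives $(K^{jk})^\ast=-K^{kj}$: the operator produced by \eqref{eq:Hamfrombiv} is automatically skew-adjoint, equivalently the bracket \eqref{eq:bracketdef} it defines is skewsymmetric. Conversely, starting from a skew-adjoint $K$ and forming $P$ as in \eqref{eq:bivfromHam}, the same computation returns $K$ unchanged, since skew-symmetrisation acts trivially on an already skew-adjoint operator. This shows that \eqref{eq:Hamfrombiv} and \eqref{eq:bivfromHam} are mutually inverse bijections between skew-adjoint operators and $\hF^2$, and that skewsymmetry -- the first half of the Hamiltonian condition -- is built into the correspondence.

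It then remains to prove that $[P,P]=0$ is equivalent to the Jacobi identity for the associated bracket. Using the graded symmetry \eqref{eq:schsym} the two summands of \eqref{defsch} coincide, so that
\begin{equation}
[P,P]=2\int\sum_i\frac{\delta P}{\delta\theta_i}\,\frac{\delta P}{\delta u^i}\ \in\ \hF^3,
\end{equation}
where, for $P$ as in \eqref{eq:bivfromHam}, $\frac{\delta P}{\delta\theta_i}=K^{ij}\theta_j$ by the previous paragraph. By the Proposition identifying $\Lambda^p$ with $\hF^p$, and in particular by its injectivity statement, $[P,P]=0$ holds in $\hF^3$ if and only if $\iota([P,P])(F,G,H)\sim0$ for all $F,G,H\in\F$. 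The crux is therefore the explicit contraction: I would show
\begin{equation}
\iota([P,P])(F,G,H)\sim c\,\Big(\{F,\{G,H\}\}+\{G,\{H,F\}\}+\{H,\{F,G\}\}\Big)
\end{equation}
for a nonzero constant $c$, by substituting $\delta_{u^i}P$ and $K^{ij}\theta_j$ and contracting the three $\theta$-factors against the $\S$-shifted variational derivatives of $F,G,H$ through \eqref{defIota}, using the integration-by-parts rule \eqref{eq:intparts} to recognise the three nested Poisson brackets; the alternation of $\iota$ together with the cyclic relabelling of $F,G,H$ assembles exactly the Jacobiator. Granting this, $[P,P]=0$ is equivalent to the vanishing of the Jacobiator for all arguments, i.e.\ to the Jacobi identity, which together with skewsymmetry is the definition of a Hamiltonian operator; this proves both directions at once. (This is the difference-case counterpart of the standard differential computation underlying the isomorphism $\Lambda^p\cong\hF^p$.)

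The main obstacle I anticipate is the bookkeeping in this last contraction: one must track the signs arising from moving the anticommuting $\theta_{i,n}$ past one another, and the shift operators $\S^n$ that, in the difference setting, play the role of the derivatives of the differential case. Getting all three cyclic terms to appear with matching coefficients -- rather than a spurious linear combination -- is where care is needed, and it is essentially the only nontrivial point, the remaining steps being the formal manipulations of the variational calculus already set up in this section.
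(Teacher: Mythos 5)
Your proposal is correct in outline, and its first half --- the identification of \eqref{eq:Hamfrombiv} and \eqref{eq:bivfromHam} as mutually inverse maps between skew-adjoint operators and $\hF^2$, with skewsymmetry built into the correspondence --- is essentially the paper's own argument, only phrased through formal adjoints rather than through the skewsymmetry axiom of the multiplicative $\lambda$-bracket (the manipulation \eqref{eq:skewBiv} in the paper is exactly your identity $(K^{jk})^\ast=-K^{kj}$ in disguise). Where you genuinely diverge is the Jacobi half. The paper never contracts $[P,P]$ against test functionals: it normalises the trivector with the operator $\mathcal{N}=\tfrac13\sum_k\theta_k\delta_{\theta_k}$ so that it takes the form $\sum A^{ijk}_{m,n}\theta_k\theta_{i,m}\theta_{j,n}$, and then matches these coefficients, monomial by monomial, against the coefficients $A^{ijk}_{m,n}$ of $\lambda^m\mu^n-\lambda^n\mu^m$ in the PVA-Jacobi identity \eqref{eq:PVAJacobiHam} --- a purely symbolic comparison carried out for $\ell=1$ in Appendix \ref{app:PVAJacobi}. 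You instead invoke the injectivity of $\tilde{\iota}\colon\hF^3\to\Lambda^3$ and aim at $\iota([P,P])(F,G,H)\sim c\,(\text{Jacobiator})$; this is the classical Gel'fand--Dorfman/Getzler route and is legitimate here, since the isomorphism $\Lambda^p\cong\hF^p$ is already established earlier in the paper. The paper's route avoids ever handling arbitrary test functionals or $\delta_u\{G,H\}$; yours buys independence from the mPVA machinery and a direct conceptual link between the Schouten torsion and the Jacobiator. One point you should make explicit rather than absorb into the ``bookkeeping'': $\{F,\{G,H\}\}$ a priori contains second variational derivatives of $G$ and $H$ (through $\delta_u\{G,H\}$), whereas $\iota([P,P])(F,G,H)$ involves only first variational derivatives, so your claimed identity presupposes the cancellation of all second-derivative terms in the cyclic sum; this cancellation uses the skew-adjointness of $K$ and deserves to be isolated as a lemma.
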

\begin{proof}
The equivalence between Hamiltonian operators and multiplicative Poisson vertex algebras (PVA) has been proved in \cite{dSKVW18}. The precise statement and some useful formulas are presented in Appendix \ref{app:mPVA}. In particular, any Hamiltonian difference operator defines a multiplicative PVA on the space of difference functions, defined on the generators by
\begin{equation}\label{eq:mPVAHam}
\{u^i{}_{\lambda}u^j\}=\left.K^{ji}\right\vert_{\S\to\lambda},
\end{equation}
In particular, the bracket~\eqref{eq:mPVAHam} is skewsymmetric and satisfies the PVA-Jacobi identity.

We first prove that the skewsymmetry is automatically granted by the definition~\eqref{eq:Hamfrombiv} and that~\eqref{eq:bivfromHam} is consistent with the skewsymmetry of $K$.
Let $P=\int P^{ij}_n\theta_i\theta_{j,n}$. Through formulae~\eqref{eq:Hamfrombiv} and \eqref{eq:mPVAHam} it defines a multiplicative $\lambda$ bracket $\{u^j{}_\lambda u^i\}=P^{ij}_n\lambda^n$. The skewsymmetry for this bracket (see Appendix \ref{app:mPVA-1}, property (4)) reads
\begin{equation}
\label{eq:skewExpl}
\S^{-n}P^{ji}_n\lambda^{-n}=-P^{ij}_n\lambda^n.
\end{equation}

Moreover, we have
\begin{equation}\label{eq:skewBiv}
\int P^{ij}_n\theta_i\theta_{j,n}=\int \left(\S^{-n}P^{ij}_n\right)\theta_{i,-n}\theta_j=-\int\left(\S^{-n}P^{ji}_n\right)\theta_i\theta_{j,-n}.
\end{equation}
Applying the map~\eqref{eq:Hamfrombiv} to the first and the third expressions of \eqref{eq:skewBiv} we obtain~\eqref{eq:skewExpl}. A similar computation shows that if the difference operator $K$ in \eqref{eq:bivfromHam} is not skewsymmetric, only its skewsymmetric part contributes to the definition of $P$.

We then prove that the condition $[P,P]=0$ is equivalent the PVA-Jacobi identity by a direct computation. The skewsymmetry property for a multiplicative $\lambda$-bracket imposes, in general, that
$$
\{u^i{}_{\lambda}u^j\}=P^{ij}_n\lambda^n-\S^{-n}P^{ji}_n\lambda^{-n},
$$
which, together with \eqref{eq:PVAJacobiHam}, gives an explicit form for the PVA-Jacobi identity of the general form
\begin{equation}\label{eq:PVAshort}
\sum_{m,n}A^{ijk}_{m,n}(\lambda^m\mu^n-\lambda^n\mu^m)=0,
\end{equation}
holding true $\forall\; i,j,k\in\{1,\ldots,\ell\}$, which is fulfilled if all the coefficients $A^{ijk}_{m,n}$ identically vanish.

On the other hand, the expression $[P,P]$ reads
\begin{multline}
2\int\frac{\delta P}{\delta\theta_k}\frac{\delta P}{\delta u^k}=\frac12\int\left(P^{kl}_m\left(\S^{-n}\frac{\dev P^{ij}_r}{\dev u^k_n}\right)\theta_{l,m}\theta_{i,-n}\theta_{j,r-n}\right.\\
\left.-\left(\S^{-m}P^{lk}_m\right)\left(\S^{-n}\frac{\dev P^{ij}_r}{\dev u^k_n}\right)\theta_{l,-m}\theta_{j,-n}\theta_{i,r-n}\right).
\end{multline}
The operator $\mathcal{N}=1/3\sum_k\theta_k\delta_{\theta_k}$ (introduced in the differential $\theta$ formalism in \cite{G01}) provides a normal form of the aforementioned expression, of the form
$$
\sum_{m,n}A^{ijk}_{m,n}\theta_k\theta_{i,m}\theta_{j,n},
$$
with the coefficients $A^{ijk}_{m,n}$ identical to the ones of \eqref{eq:PVAshort}. The explicit computations for the case $\ell=1$ are presented in Appendix \ref{app:PVAJacobi}.
\end{proof}

\subsection{The Poisson cohomology}\label{sec:Poisson}
A Poisson bivector defines, together with the Schouten-Nijenhuis bracket, a $\theta$-degree 1 differential $d_P:=[P,\cdot]$ on the space $\hF$.
It is obvious from the definition of Schouten bracket that $d_P\colon \hF^p\to\hF^{p+1}$. Moreover, by the graded Jacobi identity~\eqref{eq:schjac} we have
\begin{equation}
d_P^2 X=[P,[P,X]]=-[P,[P,X]]-[[P,P],X]=0.
\end{equation}
We can then define the cochain complex
$$
0\to\hF^0=\F\xrightarrow{d_P}\hF^1\xrightarrow{d_P}\hF^2\to\cdots
$$
whose cohomology is called the \emph{Poisson (or Poisson-Lichnerowicz)} cohomology. The Poisson cohomology is a graded space according to the $\theta$-degree $p$ of its elements (correspoding to local $p$-vector fields).
\begin{equation}
H(d_P,\hF)=\bigoplus_{p\geq 0}H^p(d_P,\hF)=\frac{\ker d_P\colon\hF^p\to\hF^{p+1}}{\im d_P\colon\hF^{p-1}\to\hF^p}.
\end{equation}
The interpretation of the lower cohomology groups is well-known and it is the analogue of the one in the finite dimensional case. The elements of $H^0$ are the Casimirs functionals of the bracket defined by $P$ -- elements $F$ whose variational derivative $\delta_u F$ is in the kernel of the Hamiltonian operator.

The elements of $H^1$ are evolutionary vector fields corresponding to symmetries of the bracket which are not Hamiltonian.

The cohomology groups $H^2$ and $H^3$ play a central role in the theory of deformation of the Hamiltonian structures, where the first one classifies their infinitesimal compatible deformations and the second one the obstructions to extend the deformation from infinitesimal to finite.

\medskip

Given a Poisson bivector $P_0$, an \emph{infinitesimal compatible deformation} of $P_0$ is a bivector $P_1$ such that $[P_0+\epsilon P_1,P_0+\epsilon P_1]=O(\epsilon^2)$. The compatibility is hence equivalent to $d_{P_0}P_1=0$. If $P_1$ is itself a Poisson bivector, namely $[P_1,P_1]=0$, than the deformation is said to be \emph{finite} and, in particular, $P_0$ and $P_1$ form a biHamiltonian pair.

We call an infinitesimal deformation \emph{trivial} when $P_1$ can be obtained from $P_0$ by the action of a vector field; this correspond to obtaining a new multiplicative $\lambda$ bracket $\{u_\lambda u\}'$ by introducing new coordinates $v=u+\epsilon f(u,u_1,\ldots,u_n)$, taking the order $\epsilon$ term in $\{v_\lambda v\}$ and expressing it in terms of the new variables. Deformed bracket of this form are always compatible, since they correspond to $P_1=d_P X$ for an evolutionary vector field $X$ (of characteristic $f$). Hence, the second cohomology group $H^2$ classifies the \emph{nontrivial} infinitesimal compatible deformations of a given Poisson bivector.

If $P_0+\epsilon P_1$ is not a Poisson bivector, the deformation may be extended to the order $\epsilon^2$ by adding a further bivector $P_2$, such that
$$
[P_0+\epsilon P_1+\epsilon^2 P_2,P_0+\epsilon P_1+\epsilon^2 P_2]=\epsilon^2\left([P_1,P_1]+2[P_0,P_2]\right)+O(\epsilon^3).
$$
$[P_1,P_1]$, when not vanishing, is a three-vector that -- as can be shown using the Jacobi identity for the Schouten bracket -- lies in the kernel of $d_{P_0}$. To let the $\epsilon^2$ term in the expansion vanish, there must exist $P_2$ such that $-2d_{P_0}P_2$ is equal to $[P_1,P_1]$. Such a bivector $P_2$ always exists if the third cohomology group $H^3$ is trivial, i.e. all the 3-vectors in the kernel of $d_{P_0}$ are of the form $d_{P_0}B$ for a bivector $B$. Note that the converse is not necessarily true, namely that there may exist a bivector $P_2$ as above, for \emph{a particular infinitesimal deformation} $P_1$, even in the case of nontrivial $H^3$ (so in the presence of obstructions to the extension of generic deformations).

In this paper we prove, in particular, that the second and third cohomology group for a scalar, order $(-1,1)$, Hamiltonian difference operator are trivial -- hence that there exist a change of coordinates, or a sequence of change of coordinates, for which the deformation vanishes. Moreover, we explicitly compute the zeroeth and the first cohomology groups.
\section{Poisson cohomology for a scalar Hamiltonian operator}
Let us consider an Hamiltonian operator of order $(-1,1)$. As proved in \cite{dSKVW18}, the skewsymmetry condition implies that it is of the form
\begin{equation}
K=f(u,u_1,u_{-1},\ldots)\S-\S^{-1}f(u,u_1,u_{-1},\ldots),
\end{equation}
while from the Jacobi identity it follows that $f=f(u,u_1)$ and
$$
(\S f)\frac{\dev f}{\dev u_1}=f\S\left(\frac{\dev f}{\dev u}\right).
$$
The condition is equivalent to
$$
\frac{\dev f}{\dev u_1}\Big/f=\S\left(\frac{\dev f}{\dev u}\Big/f\right),
$$
for which we note that the LHS depends on $u$ and $u_1$, while the RHS depends on $(u_1,u_2)$. This means that
$$
\frac{\dev}{\dev u_1}\log f=a(u_1)=\S \frac{\dev }{\dev u} \log f=\S a(u),
$$
namley $\log f=A(u)+A(u_1)+c$ and
$$
f=g(u)g(u_1)
$$
for some function $g$ of single variable. If $g(u)$ is not vanishing, the change of coordinates $v=\int^u\frac{1}{g(s)}ds$ brings the difference operator to the normal form
\begin{equation}\label{eq:normalform-op}
K_0=\S-\S^{-1}.
\end{equation}
This can be verified by a direct computation using the PVA formalism with the formula \eqref{eq:master}. The generic $(-1,1)$ order Hamiltonian operator corresponds to the $\lambda$ bracket $\{u_{\lambda}u\}=g(u)g(u_1)\lambda-g(u_{-1})g(u)\lambda^{-1}$. We have
\begin{multline}\label{eq:chNF}
\{v_{\lambda}v\}=\left\{\int \frac{1}{g(u)}{}_\lambda\int \frac{1}{g(u)}\right\}=\\
\frac{1}{g(u)}g(u)g(u_1)\left(\S\frac{1}{g(u)}\right)\lambda-\frac{1}{g(u)}g(u_{-1})g(u)\left(\S^{-1}\frac{1}{g(u)}\right)\lambda^{-1}=\\
\lambda-\lambda^{-1}.
\end{multline}
According to the identification \eqref{eq:bivfromHam}, the corresponding Poisson bivector in $\hF$ is
\begin{equation}\label{eq:normalform}
P=\int\theta\theta_1.
\end{equation}

The first basic building block of our computation is observing the exactness of the short sequence
\begin{equation}\label{eq:shortexact}
\xymatrix{
	0\ar[r]&\hA/\mathcal{C}\ar[r]^{\S-1}&\hA\ar[r]^\int&\hF\ar[r]&0.
	}
\end{equation}
This is obvious because the kernel of $(\S-1)$ is indeed the subalgebra of constants, and $\hF$ is by definition the quotient space $\hA/(\S-1)\hA$. From now on we will deal with the explicit scalar case \eqref{eq:normalform}, for which $\ell=1$ and $\mathcal{C}=\C$.

To compute the cohomology $H(\hF,d_{P})$ we introduce an auxiliary complex $(\hA,D_P)$ such that the following diagram commutes:
\begin{equation}
\xymatrix{
	0\ar[r]&\A\ar[d]^\int\ar[r]^{D_{P}}&\hA^1\ar[d]^\int\ar[r]^{D_{P}}&\hA^2\ar[d]^{\int}\ar[r]^{D_{P}}&\cdots\\
	0\ar[r]&\F\ar[r]^{d_{P}}&\hF^1\ar[r]^{d_{P}}&\hF^2\ar[r]^{d_{P}}&\cdots
}
\end{equation}

We will then exploit the long exact sequence in cohomology induced by \eqref{eq:shortexact}, namely
\begin{equation}\label{eq:longexact}
\xymatrix{
	\cdots\ar[r]&H^p(\hA/\C)\ar[r]&H^p(\hA)\ar[r]&H^p(\hF)\ar[r]&H^{p+1}(\hA/\C)\ar[r]&\cdots,
	}
\end{equation}
to compute the cohomology $H(\hF,d_P)$.

\subsection{The differential on $\hA$}

Given an element $P \in \hF^2$ we define the following differential operator on $\hA$
\begin{equation}\label{eq:defDP}
D_P := \sum_n \left(  \left(\S^n \frac{\delta P}{\delta \theta} \right) \frac{\partial }{\partial u_n}  + \left(\S^n \frac{\delta P}{\delta u} \right) \frac{\partial }{\partial \theta_n}  \right) .
\end{equation}
Since $[D_P, \S ] =0$, the operator $D_P$ descends to an operator on $\hF$ which is given by the adjoint action $ad_P = [P, \cdot]$ of $P$ on $\hF$ via the Schouten-Nijenhuis bracket, i.e.,
\begin{equation}\label{eq:fromDtod}
ad_P \left( \int Q \right)= \int D_P (Q),
\end{equation}
for $Q \in \hA$. If $P$ is a Poisson bivector, $ad_P=d_P$ and $d_P^2=0$.

For $P$ as in \eqref{eq:normalform} we have
\begin{equation}
D_P=\sum_n\left(\theta_{n+1}-\theta_{n-1}\right)\frac{\dev}{\dev u_n}
\end{equation} 
for which we have $D_P^2=0$. Indeed
\begin{equation}
D_P^2 f=\sum_{n,m}\left(\theta_{m+1}-\theta_{m-1}\right)\left(\theta_{n+1}-\theta_{n-1}\right)\frac{\dev^2 f}{\dev u_m\dev u_n},
\end{equation}
which is the product of a skewsymmetric (the product of $\theta$'s) and a symmetric (the second derivative) terms in $(n,m)$. Finally, a simple computation shows that $D_P$, by the map \eqref{eq:fromDtod}, gives the differential $d_PQ=\int\delta_\theta P\delta_uQ$.

\begin{remark}
	The operator $D_P$ for $P=\int\theta\theta_1$ can be also obtained as the prolongation of the vector field with characteristic $\theta_1-\theta_{-1}$.
\end{remark}
From $D_P\colon\hA^p\to\hA^{p+1}$ and $D_P^2=0$ it follows that we can introduce the Poisson-Lichnerowicz complex
\begin{equation}
\xymatrix{
	0\ar[r]&\hA^0=\A\ar[r]^{D_P}&\hA^1\ar[r]^{D_P}&\hA^2\ar[r]^{D_P}&\hA^3\ar[r]^{D_P}&\cdots.
}
\end{equation}
Its cohomology is, by the standard definition,
\begin{equation}
H(D_P,\hA)=\bigoplus_{p\geq 0}H^p(D_P,\hA)=\frac{\ker D_P\colon\hA^p\to\hA^{p+1}}{\im D_P\colon\hA^{p-1}\to\hA^p}.
\end{equation}
\begin{lemma}\label{poincare}
	$H(D_P,\hA)=\C[\theta,\theta_1]$. Since $\theta$ and $\theta_1$ are Grassmann variables, this means in particular that the cohomology is $4$-dimensional and it is generated as a vector space by $\langle 1,\theta,\theta_1,\theta\theta_1\rangle$.
\end{lemma}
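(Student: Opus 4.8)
The plan is to trivialise the differential by a linear change of the anticommuting variables, after which $(\hA,D_P)$ becomes a de Rham/Koszul complex whose cohomology is computed by the algebraic Poincar\'e lemma. The key observation is that $D_P=\sum_n(\theta_{n+1}-\theta_{n-1})\,\partial/\partial u_n$ pairs each even variable $u_n$ with the odd combination $\eta_n:=\theta_{n+1}-\theta_{n-1}=(\S-\S^{-1})\theta_n$, so I first want to understand the linear operator $\S-\S^{-1}$ on the span of the odd generators $\{\theta_n\}_{n\in\Z}$.

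First I would prove that $\{\eta_n\}_{n\in\Z}$ together with $\{\theta,\theta_1\}$ form a new basis of $\Span_\C\{\theta_n\}$. Identifying $\theta_n\leftrightarrow z^n$, this is exactly the statement that multiplication by $z-z^{-1}$ on the Laurent-polynomial ring $\C[z,z^{-1}]$ is injective with two-dimensional cokernel: since $z-z^{-1}=z^{-1}(z^2-1)$ generates the ideal $(z^2-1)$, one has $\C[z,z^{-1}]/(z^2-1)\cong\C\times\C$, spanned by the classes of $1$ and $z$, i.e.\ by $\theta_0=\theta$ and $\theta_1$. Injectivity gives linear independence of the $\eta_n$, and the cokernel computation gives the direct-sum splitting $\Span\{\theta_n\}=\Span\{\eta_n\}\oplus\Span\{\theta,\theta_1\}$. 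Relabelling the odd basis accordingly yields a graded-algebra isomorphism $\hA\cong\C[u_n:n\in\Z]\otimes\textstyle\bigwedge(\eta_n:n\in\Z)\otimes\bigwedge(\theta,\theta_1)$, under which $D_P=\sum_n\eta_n\,\partial/\partial u_n$ is precisely the de Rham differential on the first two tensor factors (with $\eta_n\leftrightarrow\ud u_n$), while $D_P\theta=D_P\theta_1=0$.

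Because $\theta,\theta_1$ are $D_P$-closed and do not appear in $D_P$, the complex splits as a direct sum of four copies of the polynomial de Rham complex $\Omega^\bullet=\C[u_n]\otimes\bigwedge(\eta_n)$, one for each monomial $1,\theta,\theta_1,\theta\theta_1$ and shifted in $\theta$-degree by $0,1,1,2$ respectively. I would then invoke the algebraic Poincar\'e lemma: the polynomial de Rham cohomology of affine space is $\C$ concentrated in degree zero. The infinitely many variables cause no trouble, since every element of $\hA$ is a polynomial and so involves only finitely many $u_n,\eta_n$; a closed form thus lies in the de Rham complex of a finite coordinate subspace, where the Euler contracting homotopy $h\omega=\int_0^1 t^{p-1}(\iota_E\omega)(tu)\,\ud t$, with $E=\sum_n u_n\,\partial/\partial u_n$ and $\iota_E=\sum_n u_n\,\partial/\partial\eta_n$, satisfies $D_Ph+hD_P=\mathrm{id}$ in positive degree. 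Collecting the four summands gives $H(D_P,\hA)\cong\C\otimes\bigwedge(\theta,\theta_1)=\C[\theta,\theta_1]$, the four-dimensional space $\langle 1,\theta,\theta_1,\theta\theta_1\rangle$.

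The main obstacle I anticipate is organisational rather than conceptual: making the nonlocal change of odd variables fully rigorous (the cokernel computation and the resulting splitting of $\Span\{\theta_n\}$), and then checking that the de Rham homotopy transported through this change of variables really inverts $D_P$ on the whole infinitely-generated algebra $\hA$ while respecting the $\theta$-grading, so that the splitting-plus-Poincar\'e conclusion holds verbatim and not merely on each finite truncation.
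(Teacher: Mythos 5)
Your strategy is the same as the paper's: both arguments regard $D_P$ as a de Rham differential with $\ud u_n$ identified with $\eta_n=\theta_{n+1}-\theta_{n-1}$ and conclude by a Poincar\'e-lemma contraction. Your Laurent-polynomial computation, realising $\Span\{\eta_n\}$ as the ideal $(z^2-1)\subset\C[z,z^{-1}]$ with cokernel $\langle 1,z\rangle$, is a clean and fully rigorous packaging of what the paper does implicitly through the substitution $\theta_{n+1}\mapsto\theta_{n-1}$; the resulting splitting $\hA\cong\A\otimes\bigwedge(\eta_n)\otimes\bigwedge(\theta,\theta_1)$ with $D_P$ acting only on the first two factors is correct, as is the reduction to finitely many variables.

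The one genuine gap is the choice of homotopy. The lemma concerns $\hA=\A[\{\theta_n\}]$ for an \emph{algebra of difference functions} $\A$, not merely the polynomial ring $\C[u_n:n\in\Z]$, and the applications in Section 4 genuinely need the non-polynomial case (for instance the $2$-cocycle with coefficients $e^{v}+e^{v_1}$ must be shown exact). The Euler homotopy $h\omega=\int_0^1t^{p-1}(\iota_E\omega)(tu)\,\ud t$ presupposes that the rescaling $u_n\mapsto tu_n$ is defined on $\A$ and that the $t$-integral stays inside it; this holds for polynomials but fails already for $e^{u}$, where one obtains $(e^{u}-1)/u$, and for an abstract $\A$ the rescaling is not defined at all. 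The repair is to replace the global Euler contraction by the coordinate-wise homotopy the paper uses: $h_n=\epsilon_n\circ\int\ud u_n$, where $\int\ud u_n$ is an antiderivative in the single variable $u_n$ and $\epsilon_n$ is the odd derivation dual to $\eta_n$ in your new basis (so $\epsilon_n(\eta_m)=\delta_{mn}$ and $\epsilon_n$ annihilates $\theta,\theta_1$). This requires only that each $\dev/\dev u_n\colon\A\to\A$ be surjective, an assumption implicit in the paper as well, and it slots directly into your tensor-factor decomposition, killing the dependence on $u_n$ and $\eta_n$ one index at a time; the rest of your argument then goes through verbatim.
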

\begin{proof}
The basic idea is that we can regard $D_P=(\theta_{n+1}-\theta_{n-1})\dev_{u_n}$ as a De Rham-type differential on the space $\hA$, identifying $(\theta_{n+1}-\theta_{n-1})$ with $d u_n$. On a topologically trivial space like $\hA$, the De Rham cohomology is concentrated in the constants in $H^0$, namely elements of $\C[\{\theta_n\}]$ without terms in the ideal generated by $(\theta_n-\theta_{n-2})$ for all $n$. In fact, any element containing a $\theta_{\bar{n}}$ variable is in the same cohomology class of the one with $\theta_{\bar{n}-2}$, since we have -- possibily with a sign given by the order of permutation of $\theta$ variables --
\begin{multline}
\theta_{n_1}\theta_{n_2}\theta_{\bar{n}}\cdots\theta_{n_p}=\theta_{n_1}\theta_{n_2}\theta_{\bar{n}-2}\dots\theta_{n_p}+(\theta_{\bar{n}}-\theta_{\bar{n}-2})\theta_{n_1}\theta_{n_2}\cdots\theta_{n_p}=\\
\theta_{n_1}\theta_{n_2}\theta_{\bar{n}-2}\dots\theta_{n_p}+D_P\left(u_{\bar{n}-1}\theta_{n_1}\theta_{n_2}\cdots\theta_{n_p}\right)\!.
\end{multline}
This observation leads us to the conclusion that representative elements in $H(D_P,\hA)$ depend only on even $\theta_{2k}$ and odd $\theta_{2k+1}$, for which we pick $\theta_0=\theta$ and $\theta_1$.

More formally, we introduce the family of homotopy operators
	\begin{equation}
	h_n=\frac{1}{2}\left(\sum_{r\geq 0}\frac{\dev}{\dev\theta_{n+2r+1}}-\sum_{r\leq0}\frac{\dev}{\dev\theta_{n-2r-1}}\right)\int d u_n.
	\end{equation}
	Such a definition is motivated by the observation that
	\begin{equation}\label{eq:poinclemma}
	\frac{1}{2}\left(\sum_{r\geq 0}\frac{\dev}{\dev\theta_{n+2r+1}}-\sum_{r\leq0}\frac{\dev}{\dev\theta_{n-2r-1}}\right)(\theta_{m+1}-\theta_{m-1})=\delta_{m,n}.
	\end{equation} 
	A direct computation, in which \eqref{eq:poinclemma} plays the central role, shows that
	\begin{equation}\label{eq:poinclemma2}
	h_nD_P+D_Ph_n=1-\pi\vert_{u_n=\emptyset}\pi\vert_{\theta_{n+1}=\theta_{n-1}},
	\end{equation}
	where by $\pi\vert_{u_n=\emptyset}$ we denote the projection which sets to 0 any element of $\hA$ with a dependancy on $u_n$ and by $\pi\vert_{\theta_{n+1}=\theta_{n-1}}$ the projection which replaces any occurrence of $\theta_{n+1}$ with $\theta_{n-1}$ (we can use the projection the other way round, replacing $\theta_{n-1}$ with $\theta_{n+1}$ if $n<0$). By \eqref{eq:poinclemma2}, for any element $f\in \ker D_P$ we have $f=\tilde{f}\vert_{\theta_{n+1}=\theta_{n-1}}+D_Pg$, where $\tilde{f}$ is the element $f$ in $\hA$ after removing all the dependancy on $u_n$ and $g\in\hA$. We can therefore pick as representatives for the cohomology classes in $H(D_P,\hA)$ the polynomials in the variables $\theta$ and $\theta_1$ alone. The property $\theta_n^2=0$, however, restricts the ring of such polynomials to a simple four dimensional vector space.
\end{proof}
\subsection{The cohomology $H(d_P,\hF)$}
The main result of this paper is a Theorem, alread stated in the Introduction, completely determining the Poisson cohomology of the Hamiltonian operator $K_0(\S)=\S-\S^{-1}$, namely $H(d_P,\hF)$ where with $d_P$ we denote the differential defined by the bivector $P$ as in \eqref{eq:normalform}. We have
{
	\renewcommand{\thetheorem}{\ref{main}}
	\begin{theorem}
		The Poisson cohomology defined by $P$ is finite dimensional. We have
		$$H^p(d_P,\hF)=0\qquad\qquad \forall\, p>1.$$
		Moreover,
		\begin{align}
		H^0(d_P,\hF)&=\left\{\int \alpha+\int\beta\, u\; \Big|\; (\alpha,\beta)\in\C^2\right\}\\
		H^1(d_P,\hF)&=\left\{\int\gamma \,\theta\;\Big|\;\gamma\in\C\right\}
		\end{align}
		The elements of the first cohomology group correspond to the evolutionary vector fields with characteristic $\gamma\in\R$.
	\end{theorem}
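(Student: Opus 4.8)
The plan is to feed Lemma~\ref{poincare} into the long exact sequence~\eqref{eq:longexact} induced by the short exact sequence~\eqref{eq:shortexact}. By Lemma~\ref{poincare} the auxiliary cohomology $H^p(D_P,\hA)$ equals $\C$ for $p=0$ (generated by $1$), $\C^2$ for $p=1$ (generated by $\theta$ and $\theta_1$), $\C$ for $p=2$ (generated by $\theta\theta_1$), and vanishes for $p\geq 3$. Since the complex $\hA/\C$ coincides with $\hA$ in every $\theta$-degree $\geq 1$ and differs only by the removal of the constants from $\hA^0$ (on which $D_P$ vanishes), the image of $D_P$ out of degree $0$ is unaffected by the quotient, so one reads off $H^0(\hA/\C)=0$, $H^1(\hA/\C)=\C^2$, $H^2(\hA/\C)=\C$, and $H^p(\hA/\C)=0$ for $p\geq 3$.

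With these values the high-degree statement is immediate: for $p\geq 3$ the relevant segment of~\eqref{eq:longexact} reads $0\to H^p(\hF)\to 0$, so $H^p(\hF)=0$. To pin down the low-degree groups I must understand the maps $H^p(\hA/\C)\to H^p(\hA)$, which are induced by $\S-1$, and this is the main obstacle: the shift does not commute with the choice of standard representatives, so I must push $\S$ of a representative back into the span of $\theta,\theta_1$ (and $\theta\theta_1$) using the reduction relation $\theta_{\bar n}-\theta_{\bar n-2}=D_P(u_{\bar n-1})$ from the proof of Lemma~\ref{poincare}, keeping track of Grassmann signs. On $H^1$ one finds, using $\theta_2\sim\theta_0$, that $(\S-1)[\theta]=[\theta_1]-[\theta]$ and $(\S-1)[\theta_1]=[\theta]-[\theta_1]$, a rank-one endomorphism of $\C^2$ with one-dimensional kernel and one-dimensional image. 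On $H^2$, using $\theta_2\theta_1=D_P(u_1\theta_1)+\theta_0\theta_1\sim\theta_0\theta_1$ and hence $\theta_1\theta_2\sim-\theta\theta_1$, one gets $(\S-1)[\theta\theta_1]=-2[\theta\theta_1]$, an \emph{isomorphism}.

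Feeding these into the low-degree part of~\eqref{eq:longexact},
\[
0\to\C\xrightarrow{\int}H^0(\hF)\to\C^2\xrightarrow{(\S-1)_*}\C^2\xrightarrow{\int}H^1(\hF)\to\C\xrightarrow{(\S-1)_*}\C\xrightarrow{\int}H^2(\hF)\to0,
\]
I chase exactness from the right. Because $(\S-1)_*$ is an isomorphism on $H^2$, its injectivity forces the connecting map $H^1(\hF)\to H^2(\hA/\C)$ to vanish, while its surjectivity gives $H^2(\hF)=\C/\im(\S-1)_*=0$. Then $\int\colon H^1(\hA)\to H^1(\hF)$ is surjective with kernel $\im\big((\S-1)_*\colon H^1(\hA/\C)\to H^1(\hA)\big)$, which is one-dimensional, so $H^1(\hF)\cong\C^2/\C=\C$. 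Finally $\int\colon\C=H^0(\hA)\hookrightarrow H^0(\hF)$ is injective (its kernel is the image of $H^0(\hA/\C)=0$), and the connecting map $H^0(\hF)\to H^1(\hA/\C)$ has image equal to $\ker(\S-1)_*$, again one-dimensional; rank--nullity then yields $\dim H^0(\hF)=1+1=2$.

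It remains to exhibit representatives. Recalling $d_PF=\int(\theta_1-\theta_{-1})\delta_uF$, both $\int\alpha$ and $\int\beta u$ are closed since their variational derivatives are constant and $\int(\theta_1-\theta_{-1})$ vanishes by shift invariance of the integral; they are linearly independent functionals, hence a basis of the two-dimensional $H^0(\hF)$. For $H^1$, the class $\int\theta$ is closed because $D_P$ annihilates each $\theta_n$, and by the dimension count it generates $H^1(\hF)$, giving $\{\int\gamma\,\theta\mid\gamma\in\C\}$; under the correspondence~\eqref{eq:1vfield} this is the evolutionary vector field of constant characteristic $\gamma$, a symmetry of $K_0$ that is not Hamiltonian.
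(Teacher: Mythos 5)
Your proposal is correct and follows essentially the same route as the paper: Lemma~\ref{poincare} combined with the long exact sequence~\eqref{eq:longexact}, the vanishing of $H^p(\hA)$ for $p>2$ for the high degrees, and an explicit computation of the induced map $(\S-1)_*$ on $H^1(\hA)$ (rank one) and $H^2(\hA)$ (an isomorphism, your $-2$) to extract $H^0$, $H^1$, $H^2$. The only cosmetic difference is that the paper exhibits the extra generator $\int u$ of $H^0$ by explicitly inverting the Bockstein map, whereas you obtain the dimension by rank--nullity and then verify closedness of the representatives directly; both are sound.
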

	\addtocounter{theorem}{-1}
}

Lemma \ref{poincare} states that $H^p(\hA)=0$ for $p>2$. This fact alone, given the long exact sequence \eqref{eq:longexact}, allows us to set the first estimation for $H(\hF)$, namely
$$
H^p(d_P,\hF)=0\qquad\qquad\forall\, p>2.
$$ 
In particular, $H^3(\hF)=0$, which means that all the infinitesimal deformations (both trivial and nontrivial) can be extended without obstructions. $H^0$, $H^1$, and $H^2$ require a more careful examination which is carried out in the following three lemmas. Together with $H^p(d_P,\hF)=0$ for $p>2$ they conclude the proofs of Theorem \ref{main}.
\begin{lemma}
	$H^0(d_P,\hF)=\langle\int1,\int u\rangle$
\end{lemma}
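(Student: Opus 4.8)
The plan is to turn the cocycle condition $d_P\!\left(\int f\right)=0$, for a density $f\in\A$ carrying no $\theta$'s (we are in $\theta$-degree zero), into an explicit difference equation for the variational derivative $\frac{\delta f}{\delta u}$. Using the formula $d_P Q=\int\frac{\delta P}{\delta\theta}\frac{\delta Q}{\delta u}$ recorded after \eqref{eq:fromDtod}, together with $\frac{\delta P}{\delta\theta}=\theta_1-\theta_{-1}$ for $P=\int\theta\theta_1$, I would first write $d_P\!\left(\int f\right)=\int(\theta_1-\theta_{-1})\frac{\delta f}{\delta u}$ and then bring it to the normal form $\int\theta\,(\S^{-1}-\S)\frac{\delta f}{\delta u}$ using the integration-by-parts rule \eqref{eq:intparts}.

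Next I would appeal to the nondegeneracy established in the proof of the isomorphism $\Lambda^1\cong\hF^1$ (ultimately \cite[Lemma 17]{Ku85}): an element $\int\theta\,Y$ of $\hF^1$ vanishes if and only if $Y=0$, since $\delta_\theta(\theta\,Y)=Y$ while $\delta_\theta$ annihilates $(\S-1)\hA$ by \eqref{varderpa}. Hence the cocycle condition collapses to the single scalar equation $(\S-\S^{-1})\frac{\delta f}{\delta u}=0$, i.e.\ $\frac{\delta f}{\delta u}\in\ker(\S^2-1)$. In the scalar setting this kernel is exactly the constants: a two-periodic difference function decomposes into an $\S$-invariant part, which lies in $\mathcal{C}=\C$, and an $\S$-anti-invariant part, which must vanish because the algebra of scalar difference functions has $\bar{\mathcal{C}}=\C$ and admits no nonzero element $w$ with $\S w=-w$. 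Thus $\frac{\delta f}{\delta u}=c$ for some $c\in\C$.

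The final step is to lift this back to the level of functionals, and this is where the only genuinely nontrivial input enters. Since $\frac{\delta}{\delta u}\int u=1$, linearity gives $\frac{\delta}{\delta u}\bigl(\int f-c\int u\bigr)=0$, so by the theorem $\ker\delta=\bar{\mathcal{C}}+(\S-1)\A=\C+(\S-1)\A$ invoked after \eqref{varpa} we have $f-cu\in\C+(\S-1)\A$, whence $\int f=c\int u+\alpha\int 1$ for some $\alpha\in\C$. Conversely, both $\int 1$ and $\int u$ are cocycles (their variational derivatives are $0$ and $1$, both in $\ker(\S-\S^{-1})$), and they are linearly independent in $\F$ because their variational derivatives differ and $1\notin(\S-1)\A$. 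Since $\hF^{-1}=0$, there is no image to quotient out, so $H^0(d_P,\hF)=\ker d_P=\langle\int 1,\int u\rangle$. The main obstacle is not computational but structural: the reduction to $(\S-\S^{-1})\frac{\delta f}{\delta u}=0$ and the identification $\ker(\S^2-1)=\C$ are elementary, whereas the passage from ``constant variational derivative'' back to ``spanned by $\int 1$ and $\int u$'' relies essentially on the deep theorem $\ker\delta=\bar{\mathcal{C}}+(\S-1)\A$.
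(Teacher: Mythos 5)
Your proof is correct, but it takes a genuinely different route from the paper's. The paper never solves the cocycle equation directly: it runs the long exact sequence \eqref{eq:longexact} induced by $0\to\hA/\C\to\hA\to\hF\to0$, uses Lemma \ref{poincare} to get $H^0(\hA)=\C$ and $H^1(\hA)=\langle\theta,\theta_1\rangle$, computes $\ker(\S-1)$ on $H^1(\hA)$ (where $\S$ swaps the classes of $\theta$ and $\theta_1$, giving the one-dimensional kernel $\langle\theta+\theta_1\rangle$), and then exhibits $\int u$ as a preimage of this class under the Bockstein map. You instead reduce $d_P\bigl(\int f\bigr)=0$ to the Casimir equation $(\S-\S^{-1})\frac{\delta f}{\delta u}=0$ via the nondegeneracy of $\int\theta\,Y$, identify $\ker(\S^2-1)=\C$ in the scalar setting, and lift back to functionals with the kernel theorem $\ker\delta=\bar{\mathcal{C}}+(\S-1)\A$. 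Your argument is more elementary and self-contained for $H^0$ -- it needs neither the Poincar\'e-type Lemma \ref{poincare} nor the exact-sequence formalism, and it makes the interpretation of $H^0$ as the space of Casimirs completely transparent -- at the price of leaning on the Kupershmidt kernel theorem, which the paper only cites. The paper's machinery, by contrast, is set up once and reused uniformly for $H^1$, $H^2$ and the vanishing in higher degree, which a direct attack would handle much less gracefully. Two small points worth tightening: your claim that there is no nonzero $w$ with $\S w=-w$ deserves the one-line justification that any such $w$ satisfies $\S^2w=w$ and hence (by looking at the highest shift $u_N$ on which it depends) lies in $\bar{\mathcal{C}}=\C$, where $\S$ acts trivially; and the assertion $1\notin(\S-1)\A$ similarly follows from $\ker(\S-1)=\bar{\mathcal{C}}=\C$ applied to the partial derivatives of a putative preimage. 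Neither gap is serious.
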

\begin{proof}
For $H^0$, the long exact sequence in cohomology reads
\begin{equation}
\xymatrix{
	H^0(\hA/\C)\ar[r]&H^0(\hA)\ar[r]^\int&H^0(\hF)\ar[r]^{\mathfrak{b}}&H^1(\hA/\C)\ar[r]^{\S-1}&H^1(\hA)\ar[r]&\dots.
}
\end{equation}
We first notice that for $p>0$ we have $\hA^p/\C\cong\hA^p$, and $H^0(\hA/\C)=0$. Therefore, $H^0(\hF)\cong H^0(\hA)\oplus N$, where $$
N\cong\im \mathfrak{b}\cong\ker (\S-1)\colon H^1(\hA)\to H^1(\hA)
$$ 
We can easily compute the induced map $\S-1$ on $H^1(\hA)$, since in it $\S\theta=\theta_1$ and $\S\theta_1=\theta$. We have
$$
(\S-1)(\alpha\theta+\beta\theta_1)=(\alpha-\beta)\theta_1-(\alpha-\beta)\theta,
$$
from which $N=\langle \theta+\theta_1\rangle$. We can therefore state that $H^0(d_P,\hF)\cong\C^2$. Moreover, we can identify representatives in the cohomology class: on one hand, we have the inclusion of $H^0(\hA)$ in $H^0(\hF)$ by the integral map, so $\int\alpha\in H^0(\hF)$ for all $\alpha\in\C$. On the other hand, we should identify the map $\mathfrak{b}$, such that $\mathfrak{b}^{-1}N\subset H^0(\hF)$. Let
$$\mathfrak{b}^{-1}:=\frac12\int u\left(\frac{\dev}{\dev\theta}+\frac{\dev}{\dev\theta_1}\right)
$$
for which we have $\mathfrak{b}^{-1}N=\int\beta u$, $\beta\in\C$. We immediately observe that $d_P\mathfrak{b}^{-1}N=\beta\int(\theta_1-\theta_{-1})=0$, so we have found the remaining $1$-dimensional subspace in $H^0(\hF)$.
\end{proof}
\begin{lemma}
	$H^1(d_P,\hF)=\langle \int\theta\rangle$
\end{lemma}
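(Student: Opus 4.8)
The plan is to run exactly the same machinery as in the computation of $H^0$, now reading off the segment of the long exact sequence \eqref{eq:longexact} around $\theta$-degree $1$:
$$
H^1(\hA/\C)\xrightarrow{\S-1}H^1(\hA)\xrightarrow{\int}H^1(\hF)\xrightarrow{\mathfrak{b}}H^2(\hA/\C)\xrightarrow{\S-1}H^2(\hA).
$$
The input is Lemma \ref{poincare}, which gives $H(D_P,\hA)=\C[\theta,\theta_1]$, so that $H^1(\hA)=\langle\theta,\theta_1\rangle$ and $H^2(\hA)=\langle\theta\theta_1\rangle$; since the constants $\C$ sit only in $\theta$-degree $0$, one also has $H^p(\hA/\C)\cong H^p(\hA)$ for every $p\geq1$. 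Exactness will then present $H^1(\hF)$ as an extension of $\ker\bigl((\S-1)\colon H^2(\hA/\C)\to H^2(\hA)\bigr)$ by the cokernel of $(\S-1)$ acting on $H^1(\hA)$.

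The computational heart is the induced action of $\S-1$ on these (at most) two-dimensional spaces. The point, already exploited in the $H^0$ lemma, is that in $H(D_P,\hA)$ one has $\theta_{n+2}\sim\theta_n$, so $\S$ descends to the involution $\theta\leftrightarrow\theta_1$ on cohomology. On $H^1(\hA)$ this yields $(\S-1)(\alpha\theta+\beta\theta_1)=(\alpha-\beta)(\theta_1-\theta)$, whose image is the line $\langle\theta_1-\theta\rangle$; hence the cokernel is one-dimensional, represented by $\theta$ (equivalently $\theta_1$, since $\theta\sim\theta_1$ modulo the image). On $H^2(\hA)$ I compute $\S(\theta\theta_1)=\theta_1\theta_2\sim\theta_1\theta=-\theta\theta_1$, so $(\S-1)$ there acts as multiplication by $-2$ and is therefore injective.

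Putting these together, the injectivity of $(\S-1)$ on $H^2(\hA/\C)\cong H^2(\hA)$ forces $\mathfrak{b}=0$, so $H^1(\hF)$ equals the image of the integral map $H^1(\hA)\to H^1(\hF)$, that is the cokernel of $(\S-1)$ on $H^1$, which is one-dimensional. A representative is $\int\theta$: it is $D_P$-closed because $\theta$ carries no $u$-dependence, hence $D_P\theta=0$, and it is non-exact precisely because the cokernel above is nonzero. This gives $H^1(d_P,\hF)=\langle\int\theta\rangle$, which under the identification of local $1$-vectors with evolutionary fields is the field of constant characteristic. The only genuinely delicate step is the bookkeeping of the induced $\S-1$ maps on the low-degree cohomology — in particular verifying that it is injective (not merely nonzero) on $H^2$, which is exactly what guarantees that no extra contribution leaks in through the connecting homomorphism $\mathfrak{b}$.
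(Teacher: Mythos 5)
Your proposal is correct and follows essentially the same route as the paper: the same segment of the long exact sequence, the observation that $\S$ descends to the involution $\theta\leftrightarrow\theta_1$ on $H(D_P,\hA)$, the injectivity of $\S-1$ on $H^2(\hA)$ forcing the connecting map $\mathfrak{b}$ to vanish, and the identification of $H^1(\hF)$ with the one-dimensional cokernel of $\S-1$ on $H^1(\hA)$, represented by $\int\theta$. The only difference is cosmetic: you make explicit that $\S-1$ acts as multiplication by $-2$ on $H^2(\hA)$, where the paper simply asserts the triviality of its kernel.
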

\begin{proof}
The relevant section of the long exact sequence is
\begin{equation}
\xymatrix{
	H^1(\hA)\ar[r]^{\S-1}&H^1(\hA)\ar[r]^\int&H^1(\hF)\ar[r]^{\mathfrak{b}}&H^2(\hA)\ar[r]^{\S-1}&H^2(\hA)\ar[r]&\dots.
}
\end{equation}
The crucial observation is that $\ker(\S-1)$ on $H^2(\hA)$ is $0$ and $(\S-1)H^2(\hA)\cong H^2(\hA)$, which means that the Bockstein homomorphism $\mathfrak{b}$ is the $0$ map. Hence $H^1(\hF)\cong\int H^1(\hA)$ and we immediately see that
$$
\int(\alpha\theta+\beta\theta_1)=(\alpha+\beta)\int \theta.
$$
\end{proof}
We have already discussed in Section \ref{sec:polyv} that the element $\int\theta\in\hF^1$ corresponds to the evolutionary vector field $\hat{X}(F)=\int\delta_uF$.
\begin{lemma}
	$H^2(d_P,\hF)=0$.
\end{lemma}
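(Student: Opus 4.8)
The plan is to exploit the long exact sequence~\eqref{eq:longexact} exactly as in the previous two lemmas, now focusing on the segment
\begin{equation}
\xymatrix{
	H^2(\hA)\ar[r]^{\int}&H^2(\hF)\ar[r]^{\mathfrak{b}}&H^3(\hA/\C)\ar[r]^{\S-1}&H^3(\hA)\ar[r]&\cdots.
}
\end{equation}
By Lemma~\ref{poincare} we know $H(D_P,\hA)=\C[\theta,\theta_1]$, so $H^3(\hA)=0$, and since $\hA^p/\C\cong\hA^p$ for $p>0$ we also have $H^3(\hA/\C)\cong H^3(\hA)=0$. Consequently the Bockstein map $\mathfrak{b}$ necessarily vanishes, and the sequence forces $H^2(\hF)$ to be a quotient of the image of $\int\colon H^2(\hA)\to H^2(\hF)$. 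Since $H^2(\hA)$ is spanned by the single class $\theta\theta_1$, the whole question reduces to determining the image of this one generator under the integral map.

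\textbf{The decisive computation.}
First I would extend the sequence one step to the left, writing the relevant portion as
\begin{equation}
\xymatrix{
	H^1(\hF)\ar[r]^{\mathfrak{b}}&H^2(\hA/\C)\ar[r]^{\S-1}&H^2(\hA)\ar[r]^{\int}&H^2(\hF)\ar[r]^{\mathfrak{b}}&H^3(\hA/\C).
}
\end{equation}
The previous lemma established that the Bockstein $\mathfrak{b}\colon H^1(\hF)\to H^2(\hA)$ is the zero map, equivalently that $\ker(\S-1)$ on $H^2(\hA)$ is trivial. By exactness at $H^2(\hA)$, the kernel of the integral map $\int$ equals the image of $\S-1$. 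The key point is therefore to show that $(\S-1)$ acts \emph{surjectively} on $H^2(\hA)$: since $\S$ interchanges $\theta$ and $\theta_1$, it sends the generator $\theta\theta_1$ to $\theta_1\theta=-\theta\theta_1$, so $(\S-1)\theta\theta_1=-2\theta\theta_1$, which is nonzero and spans all of $H^2(\hA)$. Hence $\S-1$ is an isomorphism on $H^2(\hA)$, its image is everything, and the kernel of $\int$ is all of $H^2(\hA)$.

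\textbf{Conclusion.}
Combining the two exactness statements: the map $\int\colon H^2(\hA)\to H^2(\hF)$ is zero (its image is killed because its kernel is the full one-dimensional $H^2(\hA)$), while the subsequent Bockstein into $H^3(\hA/\C)=0$ is trivially zero. Exactness at $H^2(\hF)$ then squeezes this group between a zero image and a zero kernel, giving $H^2(d_P,\hF)=0$.

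\textbf{Where the difficulty lies.}
The routine part is the bookkeeping of the long exact sequence; the genuinely delicate point is the claim that $(\S-1)$ is an isomorphism rather than merely injective on $H^2(\hA)$. I would make sure that the action of $\S$ on the cohomology is computed on honest representatives and not just formally: one must check that replacing $\theta\theta_1$ by $\S(\theta\theta_1)=\theta_1\theta_2$ and then reducing modulo $D_P\hA$ (using the homotopy identity~\eqref{eq:poinclemma2} from Lemma~\ref{poincare} to pull $\theta_2$ back to $\theta_0=\theta$) indeed returns $-\theta\theta_1$ in the cohomology class, so that the factor $-2$ is correct and $\S-1$ is invertible on the one-dimensional space $H^2(\hA)$. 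Once this is pinned down, the vanishing of $H^2(\hF)$ is immediate.
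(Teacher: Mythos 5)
Your proof is correct and follows essentially the same route as the paper: both arguments read off from the long exact sequence that $H^2(\hF)$ is the cokernel of $\S-1$ acting on the one-dimensional space $H^2(\hA)=\langle\theta\theta_1\rangle$, and both rely on the computation $\S(\theta\theta_1)=\theta_1\theta_2\sim-\theta\theta_1$, so that $\S-1$ acts as multiplication by $-2$ and is therefore surjective. The paper packages this as the short exact sequence $0\to H^2(\hA)\xrightarrow{\S-1}H^2(\hA)\xrightarrow{\int}H^2(\hF)\to0$, whereas you phrase it via $\ker\int=\im(\S-1)$; the content is identical.
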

\begin{proof}
The vanishing of the higher groups of $H(\hA)$ implies that the long exact sequence terminates as
\begin{equation}
\xymatrix{
	0\ar[r]&H^2(\hA)\ar[r]^{\S-1}&H^2(\hA)\ar[r]^\int&H^2(\hF)\ar[r]&0,
}
\end{equation}
where the first arrow is determined by the previous remark that $\ker (\S-1)$ vanishes on $H^2(\hA)$. Finally, the short exact sequence implies that
$$
H^2(\hF)=\frac{H^2(\hA)}{(\S-1)H^2(\hA)}\cong\frac{H^2(\hA)}{H^2(\hA)}\cong0.
$$
\end{proof}
The vanishing of the second cohomology group implies that all the infinitesimal deformations $\tilde{P}$ of the bivector $P$ are trivial, namely that there exists an evolutionary vector field $X$ such that $\tilde{P}=d_PX$.

\section{Applications: compatible low order scalar Hamiltonian operators}\label{sec:applic}
In Theorem \ref{main}, the main result of our paper, we obtained the Poisson cohomology of the operator $K_0=\S-\S^{-1}$. We are going to apply it to compatible \emph{Hamiltonian} operators, that form a biHamiltonian pair with $K_0$.

In their recent paper \cite{dSKVW18}, De Sole, Kac, Valeri and Wakimoto present a classification of local difference Hamiltonian structures, in terms of multiplicative Poisson vertex algebras, complete up to order $(-5,5)$. In this Section we discuss, in light of our theorem, the compatible pairs in their classification, that we present in Appendix \ref{app:compPairs}. 

We denote the Hamiltonian operators corresponding to the bracket $\{u_\lambda u\}_{k,g}$ in \eqref{eq:Kgen} by $K_{k,g}$, the one corresponding to \eqref{eq:K2g} as $\tilde{K}_{2,g}$, the one given by \eqref{eq:K3g} as $\tilde{K}_{3,g}$. The operator given by \eqref{eq:Qg} will be denoted as $Q_g$ and the one given by \eqref{eq:QQg} by $\tilde{Q}_g$.

We have proved in Section 3 that the normal form for a Hamiltonian operator of order $(-1,1)$ is $K_0$. Indeed, $K_0$ can be obtained by the change of coordinates $$u\mapsto v=\int \frac{1}{g(u)},$$
as explicitly shown in \eqref{eq:chNF}.

The same change of coordinates can be applied to any of the bi-Hamiltonian pairs $(K_{1,g},K_g)$, $(K_{1,g},\tilde{K}_{2,g})$, $(K_{1,g},\tilde{K}_{3,g})$, $(\tilde{K}_{2,g},Q_g)$, and $(K_{1,g}+K_{2,g},\tilde{Q}_g)$ to get the pairs defined by the operators $K_0=K_{1,1}$, $K_1$, $\tilde{K}_{2,1}$, $\tilde{K}_{3,1}$, $Q_1$, and $\tilde{Q}_1$. The bracket corresponding to each Hamiltonian operator is obtained computing
$$
\{v_{\lambda}v\}=\left\{\left(\int\frac{1}{g(u)}\right){}_\lambda\left(\int\frac{1}{g(u)}\right)\right\}
$$
and expressing the result using the new coordinate $v$. The computation is performed exploited the so-called master formula \eqref{eq:master}; we demonstrate it for $\tilde{K}_{2,g}$ in Appendix \ref{app:chC}.

Theorem \ref{main} states that there exist evolutionary vector fields whose action maps each of the operators compatible with $K_0$ (namely, $K_1$, $\tilde{K}_{2,1}$, and $\tilde{K}_{3,1}$) to $K_0$.
\subsection{Constant compatible Hamiltonian operators}\label{ssec:const}

Let us consider the operator $K_1=\sum_{k=2}^5 c_k(\S^k-\S^{-k})$ with arbitrary constants $c_2,\ldots,c_5$. The compatibility holds true for any choice of the constants; hence each of the homogeneous summands is compatible with $K_0$. The homogeneous term of order $(-k,k)$ is $K_{k,1}$.

We denote $P_0$ and $P_{1,k}$, respectively, the Poisson bivectors corresponding to the operators $K_0$ and $K_{k,1}$. From the vanishing of the Poisson cohomology of $P_0$, it follows that there exists a vector field $X^{(k)}$ such that $P_{1,k}=[P_0,X^{(k)}]$. Such vector field generates the infinitesimal change of coordinates under which $K_0$ becomes $K_{k,1}$. In terms of $\lambda$ brackets for multiplicative Poisson vertex algebras, we look for a function $f^{(k)}(u,u_1,u_2,\ldots)$ such that
\begin{equation}\label{eq:adjact-bracket-1}
\left\{\left(u+\epsilon f^{(k)}\right){}_\lambda \left(u+\epsilon f^{(k)}\right)\right\}_{0}=\{u_\lambda u\}_{0}+\epsilon\{u_\lambda u\}_{k,1}+O(\epsilon^2),
\end{equation}
where we denote $\{u_{\lambda}u\}_0$ the $\lambda$ bracket called $\{u_{\lambda}u\}_{1,1}$ in Appendix \ref{app:compPairs}. The term of order $\epsilon$ in \eqref{eq:adjact-bracket-1} is
$$
\{{f^{(k)}}{}_{\lambda}\, u\}_0+\{u\,{}_{\lambda}\,{f^{(k)}}\}_0=\{u_{\lambda}u\}_{k,1}
$$
from which we obtain, using the master formula \eqref{eq:master}, the set of equations
\begin{equation}
\frac{\dev f^{(k)}}{\dev u_{n-1}}-\frac{\dev f^{(k)}}{\dev u_{n+1}}=\delta_{k,n}.
\end{equation}
Its solution is
\begin{equation}
f^{(2k)}=\sum_{n=0}^{k-1}u_{2n+1},\qquad
f^{(2k+1)}=\sum_{n=0}^{k-1}u_{2n};
\end{equation}
The vector field $X^{(k)}$ producing the infinitesimal change of coordinates $u\mapsto u+\epsilon f^{(k)}$ is $-\int f^{(k)}\theta$.

From the Jacobi identity of the Schouten bracket, $[P_{k'},X^{(k)}]$ is compatible with $P_{0}$ for any $k,k'$. We have
\begin{multline}
0=[[P_{k'},X^{(k)}],P_0]+[[P_0,P_{k'},],X^{(k)}]+[[X^{(k)},P_0],P_{k'}]=\\
[[P_{k'},X^{(k)}],P_0]+[P_k,P_{k'}]=[[P_{k'},X^{(k)}],P_0].
\end{multline}
This means that vector fields $\tilde{X}^{(k)}$ can always be found in such a way that there exists a change of coordinates $\phi=e^{\mathrm{ad}_X}$, $X=\sum \tilde{X}^{(k)}$, connecting a Hamiltonian operator $K_{1}$ with the normal form $K_0$. Note that $\tilde{X}^{(k)}$ will be some multiple of $X^{(k)}$, with coefficient depending on $c_k$.
\subsection{$N=2$ compatible operators and the bi-Hamiltonian structure of the Volterra chain}\label{ssec:biHamVolt}
We consider the well-known Volterra chain, which is the evolutionary differential-difference equation
\begin{equation}
\label{eq:volterra1}
\frac{\dev u}{\dev t}=u(u_{1}-u_{-1}).
\end{equation}
It is well known (see, for instance, \cite{CMW18}) that the Volterra chain is an integrable equation admitting a biHamiltonian formulation. Indeed, one can write \eqref{eq:volterra1} with respect to two compatible Hamiltonian operators
\begin{align}
K_{1,u}&=uu_1\S-u_{-1}u\S^{-1},\\
\tilde{K}_{2,u}&=uu_1u_2\S^2-uu_{-1}u_{-2}\S^{-2}+uu_1\left(u+u_1\right)\S-uu_{-1}\left(u+u_{-1}\right)\S^{-1}
\end{align}
and the two Hamiltonian functionals
\begin{align}
H_1&=\int u&H_0&=\frac12\int\log u
\end{align}
according to $\dev_tu=K_{1,u}\delta H_1=\tilde{K}_{2,u} \delta H_0$.

The operator $K_{1,u}$ is of order $(-1,1)$, so there exist a coordinate change of the form $v=\log u$ for which it takes the constant form $K_0=\S-\S^{-1}$, corresponding to the multiplicative $\lambda$ bracket $\{v_\lambda v\}_1=\lambda-\lambda^{-1}$.

The same change of coordinates for $\tilde{K}_{2,u}$ gives
\begin{multline}
\{v_\lambda v\}_2=u_1\lambda^2+(u+u_1)\lambda-(u+u_{-1})\lambda^{-1}-u_{-1}\lambda^{-2}\\
=e^{v_1}\lambda^2+\left(e^v+e^{v_1}\right)\lambda-\left(e^v+e^{v_{-1}}\right)\lambda^{-1}-e^{v_{-1}}\lambda^{-2},
\end{multline}
namely the bracket for $\tilde{K}_{2,1}$. The compatibility of $K_0$ and $\tilde{K}_{2,1}$ can be explicitly verified and it is indeed case (2) of the classification of the compatible pairs for $g(u)=1$. The Poisson bivector corresponding to $\tilde{K}_{2,1}$ is
$$
P_2=\frac12\int\left( e^{v_1}\theta\theta_2+\left(e^v+e^{v_1}\right)\theta\theta_1\right)
$$

The compatibility of the two Hamiltonian operators is equivalent to $d_{P_0}P_2=0$. From Theorem \ref{main} we conclude that there must exist an evolutionary vector field $X$ such that $P_2=[P_0,X]$, or, equivalently, that exist $f(v,v_1,\ldots)$ as in \eqref{eq:adjact-bracket-1} such that
\begin{equation}\label{eq:miura}
\{f_\lambda v\}_0+\{v_\lambda f\}_0=\{v_\lambda v\}_2
\end{equation}
The evolutionary vector field $X=-\int f\theta$ is 
$$
X=-\int\left(e^v+e^{v_1}\right)\theta.
$$
This can easily be verified computing $[P_0,X]$:
\begin{multline}
[P_0,X]=\int\left(\frac{\delta P_0}{\delta\theta}\frac{\delta X}{\delta v}\right)=\frac12\int\left(\theta_{-1}-\theta_{1}\right)e^v\theta+\frac12\int\left(\theta_{-1}-\theta_{1}\right)e^v\theta_{-1}=\\
\frac12\int\left(e^v\theta\theta_1+e^v\theta_{-1}\theta+e^v\theta_{-1}\theta_1\right)=\frac12\int\left(\left(e^v+e^{v_1}\right)\theta\theta_1+e^{v_1}\theta\theta_2\right).
\end{multline}

\subsection{$N=3$ compatible operators}
The $(-3,3)$ order Hamiltonian operator $\tilde{K}_{3,1}$
\begin{multline}
\tilde{K}_{3,1}=e^{u_1+\I u_2}\S^3+\I\left(e^{u+\I u_1}-e^{u_1+\I u_2}\right)\S^2+e^{u+\I u_1}\S\\
-e^{u_{-1}+\I u}\S^{-1}-\I\left(e^{u_{-2}+\I u_{-1}}-e^{u_{-1}+\I u}\right)\S^{-2}-e^{u_{-2}+\I u_{-1}}\S^{-3}
\end{multline}
is compatible with $K_0$, according to the classification of compatible Hamiltonian pairs (case (3)). The vector field $X=-\int f\theta$ mapping the bivector $P_0$ into the one associated with $\tilde{K}_{3,1}$ is the solution of
\begin{align*}
\frac{\dev f}{\dev u}+\S\left(\frac{\dev f}{\dev u}\right)&=e^{u+\I u_1}+e^{u_1+\I u_2},&\quad\frac{\dev f}{\dev u_1}&=\I\left(e^{u+\I u_1}-e^{u_1+\I u_2}\right),\\
\frac{\dev f}{\dev u_2}&=e^{u_1+\I u_2},&\frac{\dev f}{\dev u_n}&=0\qquad n>2.
\end{align*}
The system is easy to solve:
\begin{equation}
f=e^{u+\I u_1}-\I e^{u_1+\I u_2}.
\end{equation}
The integrable hierarchy generated by the biHamiltonian pair $(K_0,\tilde{K}_{3,1})$ is equivalent to a stretched version of the Volterra chain. We pick $H_{-1}=\int u$ a Casimir of $K_0$. Introducing the new variable $w=u+\I u_1$ we have
\begin{equation}
\frac{\dev u}{\dev t}=\tilde{K}_{3,1}\delta H_{-1}=(1-\I)e^{w_1}+(1+\I)e^w-(1-\I)e^{w_{-1}}-(1+\I)e^{w_{-2}},
\end{equation}
namely
\begin{equation}
\frac{\dev w}{\dev t}=(1+\I)\left(e^{w_2}-e^{w_{-2}}\right).
\end{equation}
The change of coordinates $e^w\mapsto v$, $(1+\I)t\mapsto \tau$, finally, gives the equation
\begin{equation}\label{eq:volterra2}
\frac{\dev v}{\dev \tau}=v(v_2-v_{-2}),
\end{equation}
which is the Volterra chain equation \eqref{eq:volterra1} with a rescaled independent variable.
\begin{remark}
	The third order Hamiltonian operator of complementary type is not in the Volterra hierarchy generated by the biHamiltonian pair of the first and second order ones. We have observed that $\tilde{K}_{2,u}$ is the second Hamiltonian structure of the Volterra chain, while $K_{1,u}$ is the first one. It is known that the compatibility of $K_{1,u}$ and $\tilde{K}_{2,u}$ implies the existence of the recursion operator $R=\tilde{K}_{2,u}K_{1,u}^{-1}$; such operator is nonlocal, i.e. it is not a polynomial in $\S$ and $\S^{-1}$, and hence in principle the higher Hamiltonian structures in the hierarchy $K_m=R^{m-1}K_1$ are not local. The explicit form of the recursion operator is
	$$
	R=u\S+u+u_1+u\S^{-1}+u(u_1-u_{-1})(\S-1)^{-1}\frac1u.
	$$
	On the other hand, in the same coordinate system we have
	\begin{multline}
	\tilde{K}_{3,u}=uu_1u_2^\I u_3\S^3-u_{-3}u_{-2}u_{-1}^\I u\S^{-3}\\
	+\I\left(u^2u_1^\I u_2-uu_1u_2^{1+\I}\right)\S^2-\I\left(u_{-2}^2u_{-1}^\I u-u_{-2}u_{-1}u^{1+\I}\right)\S^{-2}\\
	+u^2u_1^{1+\I}\S-u_{-1}^2u^{1+\I}\S^{-1},
	\end{multline}
	which is obviously different from $RK_2$.
\end{remark}
\subsection{Compatible $N=2$ and $N=4$ order operators}
The $(-2,2)$ order Hamiltonian operator $\tilde{K}_{2,1}$ is compatible with the $(-4,4)$ order operator $Q_1$
\begin{multline}
	Q_1=e^{u_1-u_2+u_3}\S^4-\S^{-4}e^{u_1-u_2+u_3}+\left(e^{u-u_1+u_2}+e^{u_1-u_2+u_3}\right)\S^3+\\
-\S^{-3}\left(e^{u-u_1+u_2}+e^{u_1-u_2+u_3}\right)+e^{u-u_1+u_2}\S^2-\S^{-2}e^{u-u_1+u_2}.
\end{multline}
We have already shown that $\tilde{K}_{2,1}$ is compatible with $K_0$, so there exist a change of coordinates $u=F(v)$ such that the operator $\tilde{K}_{2,g}$ in the new coordinates is of the form $K_0$. Performing the same change of coordinates on $Q_1$ we then find a compatible Hamiltonian pair.

Given a change of coordinates $v=F(u,u_1,\ldots)$, the transformation law for difference operators is
\begin{equation}\label{eq:transfK}
F_* K\vert_{u} F_*^\dagger = K'\vert_{F(u)},
\end{equation}
where $F_*$ is the Fr\'echet derivative of the change of coordinates and $F_*^\dagger$ is its formal adjoint
$$
F_*=\sum_{n\in \Z}\frac{\dev F}{\dev u_n}\S^n\qquad\qquad F_*^\dagger=\sum_{n\in\Z}\S^{-n}\frac{\dev F}{\dev u_n}
$$
We observe that, for $K=K_0$ and the change of coordinates 
$$v=F(u,u_1)=-\log(u)-\log(u_1),$$
the Fr\'echet derivative and its adjoint are
$$
F_*=-\frac{1}{u}-\frac1u_1\S\qquad\qquad F_*^\dagger=-\frac1u-\frac1u\S^{-1},
$$
which gives the transformed operator
\begin{equation}
K'=\frac{1}{u_1u_2}\S^2+\left(\frac{1}{uu_1}+\frac{1}{u_1u_2}\right)\S
-\S^{-1}\left(\frac{1}{uu_1}+\frac{1}{u_1u_2}\right)-\S^{-2}\frac{1}{u_1u_2}.
\end{equation}
It is now easy to see that, in the new coordinates, we can write $K'$ as
$$
K'\vert_v=e^{v_1}\S^2+\left(e^v+e^{v_1}\right)\S-\S^{-1}\left(e^v+e^{v_1}\right)-\S^{-2}e^{v_1}=\tilde{K}_{2,1}\vert_v.
$$

The same change of coordinates $u=F^{-1}(v)$ transforming $\tilde{K}_{2,1}$ into $K_0$ can be applied to $Q_1$ to find the corresponding compatible operator. However, it is easier not to invert $F$ and to look instead for the preimage of the transformed operator. We have found that the change of coordinates $v=F(u)$ raises the order of the difference operator from $(-1,1)$ to $(-2,2)$. For this reason, the ansatz for the unknown operator of which $Q_1$ is the transformed under $F$ is the $(-3,3)$ order operator
$$
Q'= A\S^3+B\S^2+C\S-\S^{-1}C-\S^{-2}B-\S^{-3}A,
$$ 
with $A$, $B$ and $C$ in $\A$. We equate $F_* Q' F_*^\dagger$ to $Q_1\vert_{F(u)}$ to find the coefficients $A$, $B$ and $C$ of $Q'$.

We have
\begin{align}
Q_1\vert_v&=e^{v_1-v_2+v_3}\S^4+\left(e^{v-v_1+v_2}+e^{v_1-v_2+v_3}\right)\S^3+e^{v-v_1+v_2}\S^2\\
&\quad-\S^{-2}e^{v-v_1+v_2}-\S^{-3}\left(e^{v-v_1+v_2}+e^{v_1-v_2+v_3}\right)-\S^{-4}e^{v_1-v_2+v_3},\\
Q_1\vert_{F(u)}&=\frac{1}{u_1u_4}\S^4+\left(\frac{1}{uu_3}+\frac{1}{u_1u_4}\right)\S^3+\frac{1}{uu_3}\S^2\\
&\quad-\S^{-2}\frac{1}{uu_3}-\S^{-3}\left(\frac{1}{uu_3}+\frac{1}{u_1u_4}\right)-\S^{-4}\frac{1}{u_1u_4}.
\end{align}
Comparing the coefficients of $\S^4$, $\S^3$, $\S^2$ and $\S$ in the two sides of \eqref{eq:transfK} we find
$$
A=1\qquad\qquad B=0\qquad\qquad C=0
$$
which gives us
$$
Q'=\S^3-\S^{-3}=K_{3,1}.
$$
\subsection{Compatible $N=2$ and $N=5$ order operators}
The non-homogeneous $(-2,2)$ order operator $K_{1,1}+K_{2,1}=\S^2+\S^1-\S^{-1}-\S^{-2}$ is compatible with the $(-5,5)$ order operator $\tilde{Q}_1$
\begin{align}
\tilde{Q}_1&=e^{\varepsilon u_2+u_3}\S^5-\left(\varepsilon e^{\varepsilon u_1+u_2}+\varepsilon^{-1}e^{\varepsilon u_2+u_3}\right)\S^4+\\ \notag 
&\quad+\left(\varepsilon^{-1} e^{\varepsilon u+u_1}+e^{\varepsilon u_1+u_2}+\varepsilon e^{\varepsilon u_2+u_3}\right)\S^3-\left(\varepsilon e^{\varepsilon u+u_1}+\varepsilon^{-1}e^{\varepsilon u_1+u_2}\right)\S^2\\
&\quad+e^{\varepsilon u+u_1}\S-\S^{-1}e^{\varepsilon u+u_1}+\S^{-2}\left(\varepsilon e^{\varepsilon u+u_1}+\varepsilon^{-1}e^{\varepsilon u_1+u_2}\right)\\
&\quad-\S^{-3}\left(\varepsilon^{-1} e^{\varepsilon u+u_1}+e^{\varepsilon u_1+u_2}+\varepsilon e^{\varepsilon u_2+u_3}\right)\\
&\quad+\S^{-4}\left(\varepsilon e^{\varepsilon u_1+u_2}+\varepsilon^{-1}e^{\varepsilon u_2+u_3}\right)-\S^{-5}e^{\varepsilon u_2+u_3},
\end{align}
where we denote by $\varepsilon$ a primitive $3$-rd root of unity.

Choosing $H_{-1}=\int u$ the Casimir of the first bracket, and introducing the new variable $v=\varepsilon u+u_1$ we have
\begin{align}
\frac{\dev u}{\dev t}=\tilde{Q}_1\delta H_{-1}&=\left(1-\varepsilon^{-1}+\varepsilon\right)e^{v_2}+2e^{v_1}+\left(1+\varepsilon^{-1}-\varepsilon\right)e^{v}\\
&\quad-\left(1-\varepsilon^{-1}+\varepsilon\right)e^{v_{-1}}-2e^{v_{-2}}-\left(1+\varepsilon^{-1}-\varepsilon\right)e^{v_{-3}},
\end{align}
namely
\begin{equation}
\frac{\dev v}{\dev t}=-2\varepsilon^{-1}\left(e^{v_3}-e^{v_{-3}}\right).
\end{equation}
Finally, the change of coordinates $e^v\mapsto w$, $-2\varepsilon t\mapsto\tau$ gives us
\begin{equation}
\frac{\dev w}{\dev \tau}=\varepsilon w (w_3-w_{-3}),
\end{equation}
which is yet another Volterra chain equation with rescaled variables, see for instance \eqref{eq:volterra2}.

The same change of coordinates $v(u)=\varepsilon u+u_1$ brings, according to the transformation law \eqref{eq:transfK}, the operator $K_{1,1}+K_{2,2}$ to
\begin{equation}
K_0^{(\varepsilon,3)}:=\varepsilon(\S^3-\S^{-3})=\varepsilon K_{3,1}
\end{equation}
and the Hamiltonian operator $\tilde{Q}_1$ to
\begin{equation}
\tilde{K}_{2,1}^{(\varepsilon,3)}=\varepsilon \tilde{K}_{2,1}^{(3)}:=\varepsilon\left(e^{v_3}\S^6+\left(e^{v_3}+e^v\right)\S^3-\S^{-3}\left(e^{v_3}+e^v\right)-\S^{-6}e^{v_3}\right),
\end{equation}
which are a rescaled and $3$-stretched (namely, the independent variable is rescaled according to $n\mapsto 3n$) version of the biHamiltonian pair of Volterra equation when $g(u)=1$, previously discussed in Section \ref{ssec:biHamVolt}.
\begin{remark}
The rescaling of the independent variable that from a Hamiltonian operator $K$ produces a $k$-\emph{stretched} operator $K^{(k)}$ by $n \mapsto k\,n$ cannot, in general, be obtained by a change of dependent variables as the ones which existence is guaranteed by the vanishing of the second and third Poisson cohomology groups. It is possible for the case we are considering, because $K^{(3)}_{1,1}$ is compatible with $K_0$ (we have dropped the further rescaling by $\varepsilon$). From Theorem \ref{main} it follows that there exists an evolutionary vector field such that $K^{(3)}_{1,1}=[X,K_0]$, and an easy computation shows that such vector field is
\begin{equation}
X=\int \left(v_2-\frac v2\right)\theta.
\end{equation}
However, $\tilde{K}_{2,1}^{(3)}$ is neither compatible with $K_0$ nor with $\tilde{K}_{2,1}$, hence Theorem \ref{main} does not provide any further insight into it.
\end{remark}
\section{The Poisson cohomology for stretched Hamiltonian operators}\label{sec:stretch}
Given a Hamiltonian operator of order $(-N,N)$ of the form
\begin{equation}
K(\S)=\sum_{n=-N}^N a^{(n)}(u,u_1,u_{-1},u_2,u_{-2},\ldots,u_m,u_{-m},\ldots) \S^n,
\end{equation}
we define its \emph{k-stretched} version by rescaling the underlying lattice variable by a factor $k$, namely
\begin{equation}
K^{(k)}(\S):=\sum_{n=-N}^N a^{(n)}(u,u_k,u_{-k},u_{2k},u_{-2k},\ldots,u_{mk},u_{-mk},\ldots)\S^{nk}.
\end{equation}
The computation of the Poisson cohomology for the $k$-stretched $K_0^{(k)}=\S^{k}-\S^{-k}$ follows the lines of the proof of Theorem \ref{main}, but we obtain an essentially different result. The Poisson bivector corresponding to the $K_0^{(k)}$ is $P=\int\theta\theta_k$, definining on $\hA$ the differential
$$
D_P=\sum_n\left(\theta_{n+k}-\theta_{n-k}\right)\frac{\dev}{\dev u_n}.
$$
The cohomology $H(D_P,\hA)$ is, hence, the polynomial ring generated by the $2k$ variables $\{\theta,\theta_1,\ldots,\theta_{2k-1}\}$. This implies $\dim H(D_P,\hA)=2^{2k}$ and $\dim H^p(D_P,\hA)=\binom{2k}{p}$. Using the same long exact sequence argument we exploited in the proof of the main theorem, we have the first general result on the Poisson cohomology for $P$, namely
\begin{theorem}\label{thm:stretch}
	$H^p(\ud_P,\hF)=0$ for $p>2k$.
\end{theorem}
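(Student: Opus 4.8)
The plan is to run, for general $k$, the same three-step strategy that proves Theorem \ref{main}: lift the differential $d_P$ to the operator $D_P$ on $\hA$ via the short exact sequence \eqref{eq:shortexact}, compute the auxiliary cohomology $H(D_P,\hA)$ by a Poincar\'e-type homotopy argument, and transfer the result to $\hF$ through the long exact sequence \eqref{eq:longexact}. The only genuinely new ingredient is the generalisation of Lemma \ref{poincare}; the homological algebra in the transfer step is formally identical to the $k=1$ case, so I would spend the bulk of the effort on the auxiliary cohomology.

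First I would treat $D_P=\sum_n(\theta_{n+k}-\theta_{n-k})\frac{\dev}{\dev u_n}$ as a De Rham-type differential on $\hA$, now identifying $\theta_{n+k}-\theta_{n-k}$ with $\ud u_n$. Since $D_P u_n=\theta_{n+k}-\theta_{n-k}$, the variables $\theta_m$ and $\theta_{m-2k}$ are cohomologous, so modulo exact terms every $\theta$-variable collapses onto its residue class modulo $2k$, i.e.\ onto one of the $2k$ representatives $\theta_0,\theta_1,\dots,\theta_{2k-1}$. To make this precise I would introduce, in exact analogy with Lemma \ref{poincare}, the homotopy operators
\begin{equation}
h_n=\frac12\left(\sum_{r\geq0}\frac{\dev}{\dev\theta_{n+(2r+1)k}}-\sum_{r\geq0}\frac{\dev}{\dev\theta_{n-(2r+1)k}}\right)\int\ud u_n,
\end{equation}
whose defining feature is the telescoping identity generalising \eqref{eq:poinclemma},
\begin{equation}
\frac12\left(\sum_{r\geq0}\frac{\dev}{\dev\theta_{n+(2r+1)k}}-\sum_{r\geq0}\frac{\dev}{\dev\theta_{n-(2r+1)k}}\right)(\theta_{m+k}-\theta_{m-k})=\delta_{m,n}.
\end{equation}
All sums are finite on a given element of $\hA$, since each such element involves only finitely many $\theta$'s.

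A direct computation, with the telescoping identity playing the central role exactly as in \eqref{eq:poinclemma2}, should then yield
\begin{equation}
h_nD_P+D_Ph_n=1-\pi\vert_{u_n=\emptyset}\,\pi\vert_{\theta_{n+k}=\theta_{n-k}}.
\end{equation}
Applying this relation for each $n$ in turn shows that any $D_P$-cocycle is cohomologous to a constant-coefficient polynomial in $\theta_0,\dots,\theta_{2k-1}$ alone; as these are $2k$ anticommuting variables, we obtain $H(D_P,\hA)\cong\C[\theta_0,\dots,\theta_{2k-1}]$, so that $H^p(D_P,\hA)$ is $\binom{2k}{p}$-dimensional and in particular vanishes for $p>2k$. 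I would then feed this into the long exact sequence \eqref{eq:longexact}: since the quotient by the constants only affects degree $0$, one has $H^{p+1}(\hA/\C)\cong H^{p+1}(\hA)$ for $p+1\geq2$, hence for $p>2k$ both $H^p(\hA)$ and $H^{p+1}(\hA/\C)$ vanish and the segment $H^p(\hA)\to H^p(\hF)\to H^{p+1}(\hA/\C)$ reads $0\to H^p(d_P,\hF)\to0$, giving $H^p(d_P,\hF)=0$ for $p>2k$.

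I expect the main obstacle to be the verification of the homotopy identity for general $k$: tracking the two one-sided sums over the arithmetic progressions $n\pm(2r+1)k$ and checking that their combination telescopes \emph{exactly} to the projection $\pi\vert_{u_n=\emptyset}\,\pi\vert_{\theta_{n+k}=\theta_{n-k}}$ is the step where the factor $k$ enters nontrivially and where the index bookkeeping must be carried out with care. Once that identity is secured, the passage to $\hF$ is purely formal and inherits, verbatim, the argument already used for the case $k=1$.
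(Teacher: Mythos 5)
Your proposal is correct and follows essentially the same route as the paper, which simply states that the $k$-stretched case "follows the lines of the proof of Theorem \ref{main}": the key step is the generalisation of Lemma \ref{poincare} to show $H(D_P,\hA)\cong\C[\theta_0,\dots,\theta_{2k-1}]$, followed by the long exact sequence argument. In fact you supply more detail than the paper does (the explicit homotopy operators and the telescoping identity over the progressions $n\pm(2r+1)k$, which do check out), and your indexing of the second one-sided sum as $r\geq 0$ is the correct reading of the $k=1$ formula in the paper.
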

As in the $(-1,1)$ case, the lower cohomology groups should be computed explicitly. We observe, however, that already in the $2$-stretched case $H^2(\ud_P,\hF)\neq 0$.

We have indeed $P=\int\theta\theta_2$ and
$$
H^2(D_P,\hA)=\left\langle\theta\theta_1,\theta\theta_2,\theta\theta_3,\theta_1\theta_2,\theta_1\theta_3,\theta_2\theta_3\right\rangle.
$$
The relevant part of the long exact sequence in cohomology is the following
\begin{equation}
\xymatrix{
	\dots\ar[r]&H^2(\hA)\ar[r]^{\S-1}&H^2(\hA)\ar[r]^\int&H^2(\hF)\ar[r]^{\mathfrak{b}}&H^3(\hA)\ar[r]^{\S-1}&\dots.
}
\end{equation}
from which we have that 
$$
H^2(\hF)=\int H^2(\hA)\oplus\mathfrak{b}^{-1}\left(\left.\ker(\S-1)\right|_{H^3(\hA)}\right). 
$$
The integral map on $H^2(\hA)$ gives $\theta\theta_1\sim\theta_1\theta_2\sim\theta_2\theta_3\sim -\theta\theta_3$ and $\theta\theta_2\sim\theta_1\theta_3\sim-\theta\theta_2\sim0$, corresponding to the bivector $\int\theta\theta_1$. On the other hand, $H^3(\hA)$ is the 4-dimensional vector space $\langle\theta\theta_1\theta_2,\theta\theta_1\theta_3,\theta\theta_2\theta_3,\theta_1\theta_2\theta_3\rangle$. The kernel of $(\S-1)$ in the cohomology is 1-dimensional, explicitly $\langle\theta\theta_1\theta_2+\theta\theta_1\theta_3+\theta\theta_2\theta_3+\theta_1\theta_2\theta_3\rangle$. These observations are enough to state the nontriviality of the second Poisson cohomology group for the $2$-stretched operator, and in particular that it is $2$-dimensional. The constant bivector in $H^2(\hF)$ corresponds to the operator $K_0$.

In Section \ref{sec:Poisson} we have explained that 2-cochains correspond to nontrivial infinitesimal deformations of the Poisson bivector. Our result means that there does not exists an evolutionary vector field $\hat{X}$ such that $[\hat{X},P]=\int\theta\theta_1$. However, we can find a \emph{formal} vector field $\hat{X}_f$ solution of
\begin{equation}\label{eq:formvec}
[\hat{X}_f,P]=\int\frac{\delta \hat{X}_f}{\delta u}\frac{\delta P}{\delta \theta}=\int\theta\theta_1.
\end{equation}
Indeed, \eqref{eq:formvec} is equivalent to the following set of equations for $\hat{X}_f=\int X_f\theta$
\begin{equation}
\S^m\left(\frac{\dev X_f}{\dev u_{-1-m}}-\frac{\dev X_f}{\dev u_{3-m}}\right)=\delta_{m,0}.
\end{equation}
A solution of this system is $X_f=\sum_{m\geq 0}u_{-4m-1}$. The formal vector turns out to be an infinite sum of vector fields of characteristic $u_{-4m-1}$. This result does not contradict our statement about Poisson cohomology, since by definition the characteristic of a vector field is an element of the algebra of difference functions, and suche elements depend only on a finite number of varaibles $u_n$.
\begin{remark}
	The structure of the Poisson cohomology for the stretched operator has a striking resemblance to the Poisson cohomology of constant higher order scalar \emph{differential} Hamiltonian operators. In \cite[Remark 15]{CCS15} the Poisson cohomology for the first order scalar differential operator is obtained with a similar approach to the one adopted in this paper. The procedure can be repeated for higher order differential operators of the form $Q_k(\dev)=\dev^{2k+1}$. In this case we obtain $H^p(Q_k)=0$ for $p>2k+1$, too. Similarly, $H^2(Q_k)\neq 0$ for $k>0$, and in particular the first order scalar differential operator $\dev$ is a cocycle of $Q_1$. However, a major difference between the difference and the differential case is the lack of the formal vector field mapping $Q_1$ into $Q_0$.
\end{remark}
\section{Final remarks}
The results obtained in this paper constitute an analogue of Getzler's theorem for the Poisson cohomology of Hamiltonian operators, in the restricted contest of the scalar case, in the difference setting.

The vanishing of the second and third cohomology group for $K_0=\S-\S^{-1}$ means that there exist a Miura-type change of coordinates $u\mapsto f(u,u_{-1},u_{1},\ldots)$ under which a Hamiltonian operator compatible with $K_0$ is of the form $K_0$. Moreover, we have explicitly shown in Section 3 that any scalar difference Hamiltonian of order $(-1,1)$ can be brought to the form $K_0$ with a change of coordinates, hence implying that the aforementioned result holds for any Hamiltonian operator of this order.
However, the change of coordinates can be formal, as discussed in Section 4.2; this is already well known in the differential case, because the Poisson cohomology guarantees the existence of vector fields, for which the change of coordinates is the action of their exponential map. 
%

De Sole, Kac, Valeri and Wakimoto \cite{dSKVW18} have obtained the general form for scalar Hamiltonian operators from order $(-1,1)$ up to $(-5,5)$. Using the theory of multiplicative Poisson vertex algebras we see the dependancy on an arbitrary function can be removed by a change of coordinates \eqref{eq:chNF} -- one can verify on a case-by-case basis that the same result holds for all the operators in the classification --, while the theorem we proved implies that there exists an (operator-dependent) system of coordinates where any operator compatible with $K_0$ is of the form $K_0$. This does not affect the Hamiltonian operators of their classification which are not compatible with the $(-1,1)$ order one (and are of order $(-4,4)$ and $(-5,5)$).

Another minor contribution present in this paper is the observation that the integrable hierarchy defined by the biHamiltonian pair constituted by a $(-1,1)$ Hamiltonian operator and a compatible $(-3,3)$ order one is, despite the different order of the Hamiltonian operators, equivalent to the Volterra hierarchy.

All the results obtained in this paper apply to \emph{local} Hamiltonian operators, namely to operators which are polynomials in $\S$ and $\S^{-1}$. A theory of rational Hamiltonian operator has been recently developed \cite{CMW18-2}; the Poisson cohomology of such a larger class of structures is the natural further topic in the direction of their classification.

\appendix
\section{Multiplicative Poisson vertex algebras and Hamiltonian structures}\label{app:mPVA}
In Section \ref{sec:formalism} and Section \ref{sec:applic} we have adopted the formalism of multiplicative Poisson vertex algebras (multiplicative PVAs), introduced by De Sole, Kac, Valeri and Wakimoto \cite{dSKVW18}, as a computational tool for operating on difference operators.

In this appendix we present the main results we need and that we have referenced in the paper.
\subsection{The definition}\label{app:mPVA-1}
In this paragraph we present the definition of multiplicative PVAs and the so-called \emph{master formula}, the main computational tool they provide. The equivalence between Hamiltonian operators, Poisson bivectors and multiplicative PVAs has been discussed in Proposition \ref{thm:OpBiv}.

\begin{definition} A multiplicative PVA is an algebra of difference functions $(\A,\S)$ endowed with a $\R$-bilinear operation
\begin{equation}
\begin{split}
\{\phantom{f}_\lambda\phantom{g}\}\colon \A\times\A&\to\A[\lambda,\lambda^{-1}]\\
\{f_\lambda g\}&:=\sum_{s=-N}^Nc(f,g)_{(s)}\lambda^s
\end{split}
\end{equation}
called the \emph{$\lambda$ bracket}, satisfying the properties
\begin{enumerate}
	\item (sesquilinearity) $\{\S f_\lambda g\}=\lambda^{-1}\{f_\lambda g\}$, $\{f_{\lambda}\S g\}=(\lambda\S)\{f_{\lambda}g\}$;
	\item (left Leibniz rule) $\{f_\lambda gh\}=\{f_\lambda g\}h+g\{f_\lambda h\}$;
	\item (right Leibniz rule) $\{fg_\lambda h\}=\{f_{\lambda\S} h\}g+\{g_{\lambda\S} h\}f$ -- this should be interpreted as $\{f_{\lambda\S}h\}g=\sum c(f,h)_{(s)}\S^sg\lambda^s$;
	\item (skewsymmetry) $\{g_\lambda f\}=-{}_\to\{f_{(\lambda\S)^{-1}}g\}$, where the right hand side should be read as $-\sum(\lambda \S)^s c(f,g)_{(s)}$;
	\item (PVA-Jacobi identity) $\{f_\lambda\{g_\mu h\}\}-\{g_\lambda\{f_\lambda h\}\}=\{\{f_\lambda g\}_{\lambda\mu}h\}$.
\end{enumerate}
\end{definition}
A Hamiltonian ($\ell\times\ell$ matrix of) difference operator $K^{ij}$ defines a multiplicative PVA by letting
\begin{equation}
\{u^i{}_\lambda u^j\}:=K^{ji}\big|_{\S\to\lambda}=K^{ji}(\lambda)
\end{equation} 
and then extending the bracket from the generators of $\A$ to the full algebra according to the properties (1)--(3).

The expression for the bracket on the full algebra $\A$ is called the \emph{master formula} and it has the form
\begin{equation}\label{eq:master}
\{f_{\lambda}g\}=\sum_{i,j=1}^\ell\sum_{n,m\in\Z}\frac{\dev g}{\dev u^j_m}(\lambda\S)^m\{u^i{}_{\lambda\S}u^j\}(\lambda\S)^{-n}\frac{\dev f}{\dev u^i_n}.
\end{equation}

In particular, the condition of being an Hamiltonian operator for $K$ is equivalent the PVA-Jacobi identity for any triple of generators $(u^i,u^j,u^k)$ \cite{dSKVW18}. Explicitly, we have
\begin{multline}
\label{eq:PVAJacobiHam}
\sum_{l=1}^\ell\sum_{n\in\Z}\left(\frac{\dev K^{kj}(\mu)}{\dev u^l_n}(\lambda \S)^nK^{li}(\lambda)-\frac{\dev K^{ki}(\lambda)}{\dev u^l_n}(\mu \S)^nK^{lj}(\mu)\right)=\\
\sum_{l=1}^\ell\sum_{n\in\Z} K^{kl}(\lambda\mu\S)(\lambda\mu\S)^{-n}\frac{\dev K^{ji}(\lambda)}{\dev u^l_n}.
\end{multline}

\subsection{Equivalence between Schouten and PVA-Jacobi identities in the scalar case}\label{app:PVAJacobi}
In the scalar case, the PVA-Jacobi identity for the multiplicative $\lambda$ bracket defined by the difference operator $K=\sum_{s}K^{(s)}\S^s$ is
\begin{multline}
\sum_{n,m,q\in\Z}\left(\frac{\dev K^{(m)}}{\dev u_n}\left(\S^nK^{(q)}\right)\lambda^{n+q}\mu^m-\frac{\dev K^{(m)}}{\dev u_n}\left(\S^nK^{(q)}\right)\mu^{n+q}\lambda^m-\right.\\
\left. K^{(m)}\left(\S^{m-n}\frac{\dev K^{(q)}}{\dev u_n}\right)\lambda^{m+q-n}\mu^{m-n}\right)=0.
\end{multline}
The skewsymmetry property \eqref{eq:skewBiv} imposes a special form for a scalar bracket, namely $K^{(-s)}=-\S^{-s}K^{(s)}$, or
\begin{equation}\label{eq:skewbracket-det}
\{u_{\lambda}u\}=\sum_{s>0} K^{(s)}\lambda^s-\S^{-s}K^{(s)}\lambda^{-s}.
\end{equation}
which gives the explicit form for the PVA-Jacobi identity
\begin{multline}\label{eq:pvapf}
-\frac{\dev K^{(m)}}{\dev u_n}\S^n K^{(q)} \,\boldsymbol{\lambda\mu}(m,n+q)+\frac{\dev K^{(m)}}{\dev u_n}\S^{n-q}K^{(q)}\,\boldsymbol{\lambda\mu}(m,n-q)\;-\\
(\S^nK^{(q)})\left(\S^{-m}\frac{\dev K^{(m)}}{\dev u_{n+m}}\right)\,\boldsymbol{\lambda\mu}(n+q,-m)\;+\\
(\S^{n-q}K^{(q)})\left(\S^{-m}\frac{\dev K^{(m)}}{\dev u_{n+m}}\right)\,\boldsymbol{\lambda\mu}(n-q,-m)-K^{(q)}\left(\S^{q-n}\frac{\dev K^{(m)}}{\dev u_n}\right)\,\boldsymbol{\lambda\mu}(q-n+m,q-n)\\
-(\S^{-q}K^{(q)})\left(\S^{-q-n}\frac{\dev K^{(m)}}{\dev u_n}\,\boldsymbol{\lambda\mu}(-q-n+m,-q-n)\right)=0,
\end{multline}
where we have denoted for short $\boldsymbol{\lambda\mu}(a,b)=-\boldsymbol{\lambda\mu}(b,a)=\lambda^a\mu^b-\lambda^b\mu^a$.

On the other hand, the skewsymmetric operator $K$ corresponds to the bivector
$$
P=\sum_{s>0}\frac12\int K^{(s)}\,\theta\theta_s,
$$
for which we want to compute the Schouten identity
\begin{equation}
[P,P]=2\int\frac{\delta P}{\delta\theta}\frac{\delta P}{\delta u}.
\end{equation}
We compute the two variational derivatives obtaining
\begin{align}
\frac{\delta P}{\delta\theta}&=K^{(m)}\theta_m-\left(\S^{-m}K^{(m)}\right)\theta_{-m},\\
\frac{\delta P}{\delta u}&=\left(\S^{-s}\frac{\dev K^{(m)}}{\dev u_s}\right)\theta_{-s}\theta_{m-s},
\end{align}
making
\begin{multline}
[P,P]=\frac12\int\left(K^{(m)}\left(\S^{-s}\frac{\dev K^{(q)}}{\dev u_s}\right)\theta_m\theta_{-s}\theta_{q-s}\right.\\
\left.-\;(\S^{-m}K^{(m)})\left(\S^{-s}\frac{\dev K^{(q)}}{\dev u_s}\right)\theta_{-m}\theta_{-s}\theta_{q-s}\right).
\end{multline}
We then exploit the analogue of the normalisation operator introduced by Barakat \cite{B08} which gives a standard form of elements in $\hF$. We have
\begin{equation}
\label{eq:barakat}
F=\mathcal{N}F:=\frac1p\theta\frac{\delta F}{\delta \theta}\qquad\qquad\text{for }F\in\hF^p,
\end{equation} 
where the equality holds in $\hF$, namely $F-\mathcal N F=(\S-1)G$ for $F,G\in\hF$.

We obtain $[P,P]=\int\theta T$, with
\begin{multline}\label{eq:schpf}
T=(S^{-m}K^{(m)})\left(\S^{-s-m}\frac{\dev K^{(q)}}{\dev u_s}\right)\theta_{-m-s}\theta_{-m-s+q}-\\
(\S^sK^{(m)})\frac{\dev K^{(q)}}{\dev u_s}\theta_{m+s}\theta_q+(\S^{s-q}K^{(m)})\left(\S^{-q}\frac{\dev K^{(q)}}{\dev u_s}\right)\theta_{s+m-q}\theta_{-q}-\\
K^{(m)}\left(\S^{m-s}\frac{\dev K^{(q)}}{\dev u_s}\right)\theta_{m-s}\theta_{q-s+m}+(\S^{s-m}K^{(m)})\frac{\dev K^{(q)}}{\dev u_s}\theta_{s-m}\theta_q-\\
(\S^{s-q-m}K^{(m)})\left(\S^{-q}\frac{\dev K^{(q)}}{\dev u_s}\right)\theta_{s-q-m}\theta_{-q}
\end{multline}
The vanishing of $[P,P]$ is hence equivalent to the vanishing of $T$. We observe that, after a change of names of the indices, \eqref{eq:schpf} takes the very form of \eqref{eq:pvapf} when we identify $\boldsymbol{\lambda\mu}(a,b)$ with $\theta_b\theta_a$. We illustrate the procedure for just one of the six terms -- it is straightforward to check the same for all the remaining ones.

In \eqref{eq:pvapf} we have a term of the form
\begin{equation}\label{eq:pf2}
(\S^{n-q}K^{(q)})\left(\S^{-m}\frac{\dev K^{(m)}}{\dev u_{n+m}}\right)\,\boldsymbol{\lambda\mu}(n-q,-m).
\end{equation}
In \eqref{eq:schpf}, on the other hand, we have
\begin{multline}
-(\S^{s-q-m}K^{(m)})\left(\S^{-q}\frac{\dev K^{(q)}}{\dev u_s}\right)\theta_{s-q-m}\theta_{-q}=\\
(\S^{s-q-m}K^{(m)})\left(\S^{-q}\frac{\dev K^{(q)}}{\dev u_s}\right)\theta_{-q}\theta_{s-q-m}.
\end{multline}
A change of indices $(t\to m,n\to s-q,m\to q)$ in \eqref{eq:pf2} gives
\begin{equation}
(\S^{s-q-m}K^{(m)})\left(\S^{-q}\frac{\dev K^{(q)}}{\dev u_{s}}\right)\,\boldsymbol{\lambda\mu}(s-q-m,-q),
\end{equation}
namely the corresponding term in $T$ after the aforementioned identification.
\subsection{Compatible pairs of multiplicative PVA}\label{app:compPairs}
Two $\lambda$ brackets $\{\cdot_{\lambda}\cdot\}_{1,2}$ of multiplicative PVAs are said to be \emph{compatible} if the bracket $\{\cdot_{\lambda}\cdot\}_1+\alpha \{\cdot_{\lambda}\cdot\}_2$ is the $\lambda$ bracket of a multiplicative PVA for all the values of $\alpha\in\C$.

In \cite{dSKVW18}, the authors provide the classification of all the scalar multiplicative PVAs up to the order $(-5,5)$ (we say that a scalar multiplicative $\lambda$ bracket is of order $(M,N)$ if it is of the form $\{u_{\lambda}u\}=\sum_{s=M}^N a^{(s)}\lambda^s$, namely if the corresponding difference operator is of the same order). The compatible pairs among their list are the following \cite[Theorem 2.5, 8.1, 9.1]{dSKVW18}:
\begin{enumerate}
	\item For a function $g(u)$ and $c_2,\ldots,c_5$ arbitrary constants, the pair $$
	\{u_\lambda u\}_{1,g}:=g(u)g(u_1)\lambda-g(u)g(u_{-1})\lambda^{-1}$$
	and 
	\begin{equation}\label{eq:Kgen}
	\{u_\lambda u\}_g=\sum_{k=2}^5c_k\{u_\lambda u\}_{k,g}:=\sum_{k=2}^5 c_k\left(g(u)g(u_k)\lambda^k-g(u)g(u_{-k})\lambda^{-k}\right).
	\end{equation}
	\item For a nonzero function $g(u)$, $\{u_\lambda u\}_{1,g}$ as above and
	\begin{align}\label{eq:K2g}
	\{u_\lambda u\}^\sim_{2,g}&:=g(u)g(u_2)e^{F(u_1)}\lambda^2+g(u)g(u_1)\left(e^{F(u)}+e^{F(u_1)}\right)\lambda\\
	&\quad-\S^{-1}g(u)g(u_1)\left(e^{F(u)}+e^{F(u_1)}\right)\lambda^{-1}-\S^{-2}g(u)g(u_2)e^{F(u_1)}\lambda^{-2},
	\end{align}
	where $F(u)=\int^u\frac1g$.
	\item For a nonzero function $g(u)$, $\{u_\lambda u\}_{1,g}$ as above and
	\begin{align}
	\{u_\lambda u\}^\sim_{3,g}&:=g(u)g(u_3)e^{F(u_1)+\I F(u_2)}\lambda^3+\\
	&\quad+\I g(u)g(u_2)\left(e^{F(u)+\I F(u_1)}-e^{F(u_1)+\I F(u_2)}\right)\lambda^2+\\
	&\quad+g(u)g(u_1)e^{F(u)+\I F(u_1)}\lambda-\S^{-1}g(u)g(u_1)e^{F(u)+\I F(u_1)}\lambda^{-1}\\
	&\quad-\S^{-2}\I g(u)g(u_2)\left(e^{F(u)+\I F(u_1)}-e^{F(u_1)+\I F(u_2)}\right)\lambda^{-2}\\ \label{eq:K3g}
	&\quad-\S^{-3}g(u)g(u_3)e^{F(u_1)+\I F(u_2)}\lambda^{-3}.
	\end{align}
	\item For a nonzero function $g(u)$, $\{u_\lambda u\}^\sim_{2,g}$ as above and the $(-4,4)$ order bracket
	\begin{align}
	\{u_{\lambda}u\}^{(iv)}_g&=g(u)g(u_4)e^{F(u_1)-F(u_2)+F(u_3)}\lambda^4-\S^{-4}g(u)g(u_4)e^{F(u_1)-F(u_2)+F(u_3)}\lambda^{-4}\\\label{eq:Qg}
	&\quad+g(u)g(u_3)\left(e^{F(u)-F(u_1)+F(u_2)}+e^{F(u_1)-F(u_2)+F(u_3)}\right)\lambda^3+\\
	&\quad-\S^{-3}g(u)g(u_3)\left(e^{F(u)-F(u_1)+F(u_2)}+e^{F(u_1)-F(u_2)+F(u_3)}\right)\lambda^{-3}\\
	&\quad +g(u)g(u_2)e^{F(u)-F(u_1)+F(u_2)}\lambda^2-\S^{-2}g(u)g(u_2)e^{F(u)-F(u_1)+F(u_2)}\lambda^{-2}.
	\end{align}
	\item For a nonzero function $g(u)$, $\{u_\lambda u\}_{1,g}+\{u_\lambda u\}_{2,g}$ as defined above and the following $(-5,5)$ order bracket
	\begin{align}\label{eq:QQg}
	\{u_\lambda u\}^{(v)}_g&=g(u)g(u_5)e^{\varepsilon F(u_2)+F(u_3)}\lambda^5-\S^{-5}g(u)g(u_5)e^{\varepsilon F(u_2)+F(u_3)}\lambda^{-5}\\
	&\quad -g(u)g(u_4)\left(\varepsilon e^{\varepsilon F(u_1)+F(u_2)}+\varepsilon^{-1} e^{\varepsilon F(u_2)+F(u_3)}\right)\lambda^4\\
	&\quad+\S^{-4}g(u)g(u_4)\left(\varepsilon e^{\varepsilon F(u_1)+F(u_2)}+\varepsilon^{-1} e^{\varepsilon F(u_2)+F(u_3)}\right)\lambda^{-4}\\
	&\quad+g(u)g(u_3)\left(\varepsilon^{-1}e^{\varepsilon F(u)+F(u_1)}+e^{\varepsilon F(u_1)+F(u_2)}+\varepsilon e^{\varepsilon F(u_2)+F(u_3)}\right)\lambda^3\\
	&\quad-\S^{-3}g(u)g(u_3)\left(\varepsilon^{-1}e^{\varepsilon F(u)+F(u_1)}+e^{\varepsilon F(u_1)+F(u_2)}+\varepsilon e^{\varepsilon F(u_2)+F(u_3)}\right)\lambda^{-3}\\
	&\quad-g(u)g(u_2)\left(\varepsilon e^{\varepsilon F(u)+F(u_1)}+\varepsilon^{-1} e^{\varepsilon F(u_1)+F(u_2)}\right)\lambda^{2}\\
	&\quad\S^{-2}g(u)g(u_2)\left(\varepsilon e^{\varepsilon F(u)+F(u_1)}+\varepsilon^{-1} e^{\varepsilon F(u_1)+F(u_2)}\right)\lambda^{-2}\\
	&\quad+g(u)g(u_1)e^{\varepsilon F(u)+F(u_1)}\lambda-\S^{-1}g(u)g(u_1)e^{\varepsilon F(u)+F(u_1)}\lambda^{-1},
	\end{align}
	where we denote $\varepsilon$ a primitive 3rd root of 1.
\end{enumerate}
Each of these brackets defines, equivalently, a scalar Hamiltonian operator, and compatible brackets define biHamiltonian pairs.

\section{Change of coordinates on $\tilde{K}_{2,g}$}\label{app:chC}
In Section \ref{sec:applic} we obvserved that the dependancy on the function of single variable $g(u)$ in the classification presented in \cite{dSKVW18} -- and in particular for the compatible operators listed in Appendix \ref{app:compPairs} -- can be removed by a change of coordinates
\begin{equation}
u\mapsto v=\int \frac{1}{g(u)}.
\end{equation}

In this paragraph we demonstrate the effect of this change of coordinates on the Hamiltonian operator $\tilde{K}_{2,g}$ defined in \eqref{eq:K2g}. Its associated $\lambda$ bracket is
\begin{align}
\{u_{\lambda}u\}&=g(u)g(u_2)e^{\int^{u_1}\frac{1}{g}}\lambda^2+g(u)g(u_1)\left(e^{\int^{u}\frac{1}{g}}+e^{\int^{u_1}\frac{1}{g}}\right)\lambda\\
&\quad-g(u)g(u_{-1})\left(e^{\int^{u}\frac{1}{g}}+e^{\int^{u_{-1}}\frac{1}{g}}\right)\lambda^{-1}-g(u)g(u_{-2})e^{\int^{u_{-1}}\frac{1}{g}}\lambda^{-2}.
\end{align}

We compute the bracket in the new coordinate
$$
v=\int\frac{1}{g(u)}
$$
using the master formula \eqref{eq:master}. By the skewsymmetry of the bracket, it is sufficient to perform the change of coordinate for the terms in the positive degrees for $\lambda$, while the negative degree will follow according to the general rule \eqref{eq:skewbracket-det}. 

For simplicity, we compute the transformation for each summand separately, according to the order of $\lambda$: $\{u_{\lambda}u\}=\{u_{\lambda}u\}^{(2)}+\{u_{\lambda}u\}^{(1)}-\{u_{\lambda}u\}^{(-1)}-\{u_{\lambda}u\}^{(-2)}$. We have
\begin{multline}
\{v_{\lambda}v\}^{(2)}=\frac{1}{g(u)} g(u)g(u_2)e^{\int^{u_1}\frac1g}\left(\S^2\frac{1}{g(u)}\right)\lambda^2=\\
\frac{1}{g(u)}g(u)g(u_2)\frac{1}{g(u_2)}e^{\int^{u_1}\frac1g}\lambda^2=e^{\int^{u_1}\frac1g}\lambda^2=e^{v_1}\lambda^2
\end{multline}
and
\begin{multline}
\{v_{\lambda}v\}^{(1)}=\frac{1}{g(u)} g(u)g(u_1)\left(e^{\int^{u}\frac1g}+e^{\int^{u_1}\frac1g}\right)\left(\S\frac{1}{g(u)}\right)\lambda=\\
\frac{1}{g(u)}g(u)g(u_1)\frac{1}{g(u_1)}\left(e^{\int^{u}\frac1g}+e^{\int^{u_1}\frac1g}\right)\lambda\\
=\left(e^{\int^{u}\frac1g}+e^{\int^{u_1}\frac1g}\right)\lambda=\left(e^v+e^{v_1}\right)\lambda.
\end{multline}
$\{u_{\lambda}u\}^{(-1)}$ and $\{u_{\lambda}u\}^{(-2)}$ are then obtained by the skewsymmetry property. Notice that the result we obtained is the same as substituting $g(u)=1$ in the definition of the bracket.

\section*{Acknowledgements}
The paper is supported by the EPSRC grant
EP/P012698/1. Both authors gratefully acknowledge the financial support.

\bibliographystyle{plain}
\bibliography{biblio-diffcoho}
\end{document}